\newcommand*{\qedfilled}{\hfill\ensuremath{\blacksquare}}
\newcommand{\be}{\begin{equation}}
\newcommand{\ee}{\end{equation}}
\newcommand{\ben}{\begin{equation*}}
\newcommand{\een}{\end{equation*}}
\newcommand{\mc}{\mathcal}
\newtheorem{lem}{Lemma}
\newtheorem{defi}{Definition}
\newtheorem{thm}{Theorem}
\newtheorem{fact}{Fact}
\newtheorem{corr}{Corollary}
\newcounter{mytempeqncnt}
\newcommand{\e}{\epsilon}
\newcommand{\mcb}{\mathcal{B}_{M,L}}
\newcommand{\bfs}{\mathbf{S}}
\newcommand{\bfsh}{\mathbf{\hat{S}}}
\newcommand{\abs}[1]{\lvert#1\rvert}
\newcommand{\norm}[1]{\lVert#1\rVert}
\newcommand{\expec}{\mathbb{E}}
\begin{document}
\title{Lossy Compression via Sparse Linear Regression: Performance under Minimum-distance Encoding}
\author{Ramji Venkataramanan,~\IEEEmembership{Member,~IEEE,}
Antony Joseph,
and Sekhar Tatikonda,~\IEEEmembership{Senior Member,~IEEE}

\thanks{This work was partially supported by NSF Grants CCF-1017744 and CCF-1217023. The material in this paper was presented in part at the 2012 IEEE International Symposium on Information Theory.}%
\thanks{R.~Venkataramanan is with the Department of Engineering, University of Cambridge, Cambridge CB2 1PZ, UK (e-mail: ramji.v@eng.cam.ac.uk).}%
\thanks{A.~Joseph is with Department of Statistics, University of California, Berkeley, CA
94704 USA (e-mail: antony.joseph@stat.berkeley.edu).}
\thanks{S. Tatikonda is with the Department of Electrical Engineering, Yale University, New Haven CT 06511, USA (e-mail: sekhar.tatikonda@yale.edu).}
\thanks{Communicated by M.~Elad, Associate Editor for Signal Processing.}
}
\maketitle

\begin{abstract}
  We study a new class of codes for lossy compression with the squared-error distortion criterion, designed using the statistical framework of high-dimensional linear regression. Codewords are linear combinations of subsets of columns of a design matrix. Called a Sparse Superposition or Sparse Regression codebook, this structure is motivated by an analogous construction proposed recently by Barron and Joseph for communication over an AWGN channel. For i.i.d Gaussian sources and  minimum-distance encoding, we show that such a code  can attain the Shannon rate-distortion function with the optimal error exponent, for all distortions below a specified value. It is also shown that  sparse regression codes are robust in the following sense: a codebook designed to compress an i.i.d Gaussian source of variance $\sigma^2$  with (squared-error) distortion $D$ can compress any ergodic source of  variance less than $\sigma^2$ to within distortion $D$. Thus the sparse regression ensemble retains many of the good covering properties of the i.i.d random Gaussian ensemble, while having having a compact representation in terms of a matrix whose size is a low-order polynomial in the block-length.
\end{abstract}

\begin{IEEEkeywords}
Lossy compression, Gaussian sources, squared error distortion, rate-distortion function, error exponent, sparse regression
\end{IEEEkeywords}

\section{Introduction}
\label{sec:intro}
\IEEEPARstart{O}{ne} of the important outstanding problems in information theory is the development of practical codes for lossy compression of general sources at rates approaching Shannon's rate-distortion bound. In this paper, we study a class of codes called Sparse Superposition Codes or  Sparse Regression Codes (SPARC) for compression under the squared-error distortion criterion.  These codes are constructed based on the statistical framework of high-dimensional linear regression. The codewords are sparse linear combinations of columns of an $n \times N$ design matrix or `dictionary', where $n$ is the block-length and $N$ is a low-order polynomial in $n$. This codebook structure is  motivated by an analogous construction proposed recently by Barron and Joseph for communication over an AWGN channel \cite{AntonyML, AntonyFast}. The sparse regression structure enables the design of computationally efficient encoders based on the rich theory on sparse approximation and sparse signal recovery \cite{MP93, BarronCDD08, TroppOMP, CandesTaoLP, DonohoCS}.  Here, the performance of these codes under minimum-distance encoding is studied. The design of computationally feasible encoders is discussed in a companion paper \cite{RVGaussianFeasible}.

We lay down some notation before proceeding further. Upper-case letters are used to denote random variables, lower-case for their realizations,  and bold-face letters to denote random vectors and matrices.  All vectors  have length $n$. The  source sequence  is denoted by $\bfs \triangleq (S_1, \ldots, S_n)$, and the reconstruction sequence by $\bfsh \triangleq (\hat{S}_1, \ldots, \hat{S}_n)$.   The $\ell_2$-norm of vector $\mathbf{X}$ is denoted by $\norm{\mathbf{X}}$, and $\abs{\mathbf{X}} =  \norm{\mathbf{X}} / \sqrt{n}$ is the normalized version.

We use natural logarithms unless otherwise mentioned; entropy and mutual information are therefore measured in nats.   The Gaussian distribution with mean $\mu$ and variance $\sigma^2$ is denoted by  $\mc{N}(\mu,\sigma^2)$. The symbol $\kappa$ is used to denote a generic positive constant whose exact value is not needed. $f(x)=o(g(x))$ means $\lim_{x \to \infty} f(x)/g(x) =0$; $f(x)=\Theta(g(x))$ means $f(x)/g(x)$ asymptotically lies in an interval $[\kappa_1,\kappa_2]$ for some constants $\kappa_1,\kappa_2>0$.
%\emph{little-o, big-theta notation}.

 A rate-distortion codebook with rate $R$ and block length $n$  is a set of $e^{nR}$ length-$n$  codewords, denoted $\{\bfsh(1),\ldots, \bfsh(e^{nR}) \}$.  The quality of reconstruction is measured through an average squared-error distortion criterion
\[ d_n(\bfs, \bfsh)=\abs{\bfs -\bfsh}^2= \frac{1}{n}\sum_{i=1}^n(S_i - \hat{S}_i)^2 , \]
where $\bfsh$ is the codeword chosen to represent the source sequence $\bfs$. For this distortion criterion, an optimal encoder maps each source sequence to the codeword nearest to it in Euclidean distance.

For an i.i.d Gaussian source distributed as $\mc{N}(0,\sigma^2)$, the  rate-distortion function $R^*(D)$  -- the minimum rate  for which the distortion can be bounded by $D$ with high-probability -- is given by  \cite{CoverThomas}
\be
\begin{split}
R^*(D) & = \min_{p_{\hat{S}|S}: E(S-\hat{S})^2 \leq D} I(S;\hat{S}) \\
& = \left\{ \begin{array}{l c} \frac{1}{2} \log \frac{\sigma^2}{D} \ \text{nats/sample} & \text{  for $D < \sigma^2$}, \\
0 & \text{  for $D \geq \sigma^2$}.
\end{array}
\right.
\end{split}
\label{eq:gaussian_rd}
\ee
This rate can be achieved through Shannon-style random codebook selection: pick each codeword independently as an i.i.d  Gaussian random vector distributed as $\mc{N}(0, \sigma^2-D)$. Both the storage and encoding complexities of such a codebook grow exponentially with block length. Lattice-based codes for lossy compression have been widely studied (e.g \cite{EyForney93,Zamir02}), and have a compact representation, i.e., low storage complexity. There are computationally  efficient quantizers for certain classes of lattice codes, but  the high-dimensional lattices needed to approach the rate-distortion bound have exponential encoding complexity \cite{Zamir02}. We also note that for sources with finite alphabet, various coding techniques have been proposed recently to approach the rate-distortion bound with computationally feasible encoding and decoding \cite{GuptaVerduWeiss,KontGioran,JalaliWeiss, GuptaVerdu09, WainManeva10,polarrd}.

Sparse regression codes for lossy compression were first considered in \cite{KontSPARC} where some preliminary results were presented. In this paper, we analyze the performance of these codes under optimal (minimum-distance) encoding. The main contributions are the following.
\begin{itemize}
\item[-] We obtain an achievable SPARC rate-distortion function and error exponent for compression of ergodic sources (with known first and second moments) under the squared-error distortion criterion.

\item[-] For the special case of compressing i.i.d Gaussian sources, our results imply that SPARCs achieve the optimal squared-error distortion with the optimal error exponent for all rates above a specified value (approximately $0.797$ nats or $1.15$ bits per sample).\footnote{Below this rate, the achieved distortion is higher than the optimal value; however, we think that this is due  to a limitation of the proof technique. We conjecture that SPARCs  achieve the optimal rate-distortion trade-off for all rates with minimum-distance encoding.}

\item[-] Our results show that SPARCs (at least for rates greater than $1.15$ bits) are essentially as good as random i.i.d Gaussian codebooks  in terms of distortion-rate function, error exponents, and robustness. By robustness, we mean that a SPARC designed to compress an i.i.d Gaussian source with variance $\sigma^2$ to distortion $D$  can compress any ergodic source with variance less than or equal to $\sigma^2$ to distortion $D$. This property is also satisfied by random i.i.d Gaussian codebooks \cite{Lapidoth97,SakMismatch1,SakMismatch2}.\footnote{In fact, Lapidoth \cite{Lapidoth97} also shows that for any ergodic source of a given variance,  with a Gaussian random codebook one cannot attain a mean-squared distortion smaller than the distortion-rate function of an i.i.d Gaussian source with the same variance.}
These results show that the sparse regression ensemble has  good covering properties, with the advantage of much smaller codebook storage complexity than the i.i.d random ensemble (polynomial vs. exponential  in block-length).
\end{itemize}

A consequence of the SPARC's compact representation is that its codewords are dependent.  We deal with the dependence using two techniques from the literature on random graphs. For the rate-distortion analysis,  the well-known second-moment method \cite{JansonBook} suffices. The error exponent requires a more refined analysis and we use Suen's inequality \cite{janson98,JansonBook}, an exponential bound on the tail probability of a sum of dependent indicator random variables. This technique may be of independent interest and useful in other problems in information theory.

In Section \ref{sec:sparc}, we describe the sparse regression codebook along with the encoding and decoding procedure. The main results, describing the rate-distortion and error exponent performance of SPARCs, are stated in Section \ref{sec:main_results}. The proofs of these results are given in the next two sections. We first derive the rate-distortion function  in Section \ref{sec:proof_rd_thm} using the second moment method, which highlights the features that make SPARCs  more challenging to analyze than i.i.d random codebooks.
The analysis is then refined using Suen's inequality in Section \ref{sec:proof_err_exp} to obtain the error exponent. Section \ref{sec:conc} concludes the paper with a brief discussion of why the proof techniques do not yield the optimal distortion-rate function for rates smaller than $1.15$ bits.

\section{Sparse Regression Codes} \label{sec:sparc}
\begin{figure}[t]
\centering
\includegraphics[width=3.4in]{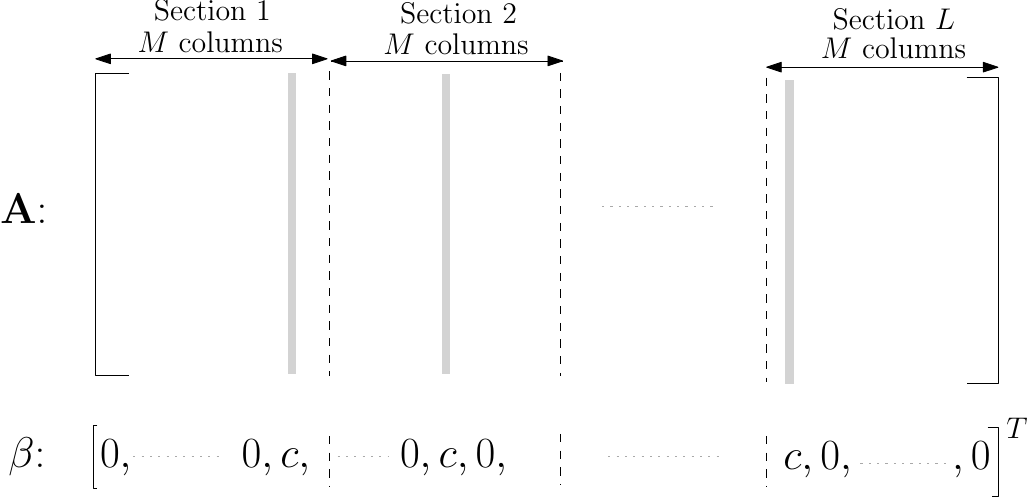}
\caption{\small{$\mathbf{A}$ is an $n \times ML$ matrix and $\beta$ is a $ML \times 1$ binary vector. The positions of the non-zeros in $\beta$  correspond to the gray columns of $\mathbf{A}$ which combine to form the codeword $\mathbf{A}\beta$.}}
\vspace{-5pt}
\label{fig:sparserd}
\end{figure}

A sparse regression code (SPARC) is defined in terms of a design matrix $\mathbf{A}$ of dimension $n \times ML$ whose entries are i.i.d. $\mathcal{N}(0,1)$.
Here $n$ is the block length and $M$ and $L$ are integers whose values will be specified shortly in terms of $n$ and the rate $R$.  As shown in Figure \ref{fig:sparserd}, one can think of the matrix $\mathbf{A}$ as composed of $L$ sections with $M$ columns each. Each codeword is a linear combination of $L$ columns, with one column from each section.
Formally, a codeword can be expressed as  $\mathbf{A} \beta$, where $\beta$ is  a binary-valued $ML \times 1$ vector $(\beta_1, \ldots, \beta_{ML})$ with the following property:  there is exactly one non-zero $\beta_i$ for  $1 \leq i \leq M$, one non-zero $\beta_i$ for $M+1 \leq i \leq 2M$, and so forth.  The non-zero values of $\beta$ are all set equal to $c \triangleq \frac{\gamma}{\sqrt{L}}$ where $\gamma$ is a constant that will be specified later (at the beginning of Section \ref{sec:proof_rd_thm}).  Denote the set of all $\beta$'s that satisfy this property by $\mcb$.

\emph{Minimum-distance Encoder}: This is defined by a mapping $g: \mathbb{R}^n \to \mcb$. Given the source sequence $\bfs$, the encoder determines the $\beta$ that produces the codeword closest in Euclidean distance, i.e.,
 \[ g(\bfs) = \underset{\beta \in \mcb}{\operatorname{argmin}} \ \norm{\bfs - \mathbf{A}\beta}.\]

\emph{Decoder}: This is a mapping $h: \mcb \to \mathbb{R}^n$. On receiving ${\beta} \in \mcb$ from the encoder, the decoder produces reconstruction $h(\beta) = \mathbf{A}\beta$.

Since there are $M$ columns in each of the $L$ sections, the total number of codewords is $M^L$. To obtain a compression rate of $R$ nats/sample, we therefore need
\be
M^L = e^{nR}.
\label{eq:ml_nR}
\ee
There are several choices for the pair $(M,L)$ which satisfy this. For example, $L=1$ and $M=e^{nR}$ recovers the Shannon-style random codebook in which the number of columns in the dictionary $\mathbf{A}$ is $e^{nR}$, i.e., exponential in $n$. For our constructions, we choose $M=L^b$ for some $b>1$ so that \eqref{eq:ml_nR} implies
\be   L \log L = {nR}/{b}. \label{eq:rel_nL} \ee
 Thus $L$ is now $\Theta\left(\tfrac{n}{\log n}\right)$, and the number of columns $ML$ in the dictionary $\mathbf{A}$ is now $ \Theta\left(\left(\tfrac{n}{\log n}\right)^{b+1}\right)$, a \emph{polynomial}  in $n$. This reduction in storage complexity can be harnessed to develop computationally efficient encoders for the sparse regression code. This is discussed in \cite{RVGaussianFeasible}.

The code structure automatically yields low decoding complexity. The encoder can represent the chosen $\beta$ with $L$ binary sequences of $\log_2 M$ bits each. The $i$th binary sequence indicates the position of the non-zero element in section $i$. Thus the decoder complexity involved in locating the $L$ non-zero elements using the received bits is $L \log_2 M$. Reconstructing the codeword then involves $L$ additions per source sample.

Since each codeword in a SPARC is a linear combination of $L$ columns of $\mathbf{A}$ (one from each section),  codewords sharing one or more common columns in the sum will be dependent.  Also, SPARCs are not  linear codes since the sum of two codewords does not equal another codeword in general.

\section{Main Results} \label{sec:main_results}
In this section, we discuss the rate-distortion performance and error exponent of SPARCs under minimum-distance encoding.

\subsection{Rate-Distortion Performance of SPARC}

 The probability of error at distortion-level $D$ of a rate-distortion code $\mathcal{C}_n$ with block length $n$ and encoder and decoder mappings $g,h$  is
\be P_{e}(\mathcal{C}_n, D) = P\left(\abs{\bfs - h(g(\bfs))}^2 > D\right).  \ee
\begin{defi}
A rate $R$ is achievable at distortion level $D$  if there exists a sequence of SPARCs $\{\mathcal{C}_n\}_{n=1,2,\ldots}$ such that
$\lim_{n \to \infty} P_{e}(\mathcal{C}_n, D) =0$
where for all $n$, $\mathcal{C}_n$ is a rate $R$ code defined by an $n \times L_n M_n$ design matrix whose parameter $L_n$ satisfies \eqref{eq:rel_nL} with a fixed $b$ and $M_n=L_n^b$.
%The SPARC rate-distortion function is the infimum of all achievable rates at distortion level $D$.
\end{defi}

The rate-distortion performance of SPARCs is given by the following theorem.
\begin{thm}
 %Let $x^* \approx 4.913$ be the solution of the equation $ \frac{1}{2} \log x = (1-\frac{1}{x})$.
%For $D< \sigma^2/ x^*$, SPARCs achieve the optimal rate-distortion function $R^*(D) = \frac{1}{2}\log \frac{\sigma^2}{D}$. In particular,
Let $\bfs$ be a drawn from any ergodic source with mean $0$ and variance $\sigma^2$. For $D \in (0, \sigma^2)$, let $R_{sp}(D) = \max\{ \frac{1}{2} \log \frac{\sigma^2}{D}, \ 1-\frac{D}{\sigma^2} \}$. Then for all rates $R > R_{sp}(D)$ and
\[ b>\frac{2.5 R}{R- 1  + D/\sigma^2},  \] there exists a sequence of rate $R$  SPARCs $\{\mc{C}_n\}_{n=1,2 \ldots}$  for which $\lim_{n \to \infty} P_{e}(\mathcal{C}_n, D) =0$, where $\mathcal{C}_n$ is defined by an $n  \times L_n M_n$ design matrix with $L_n$ determined by \eqref{eq:rel_nL} and $M_n =L_n^b$.
\label{thm:rd_sparc}
\end{thm}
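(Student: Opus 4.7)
I would attack this with the second-moment (Paley--Zygmund) method applied to the random count
\be
N(\bfs) = \sum_{\beta \in \mcb} \mathbf{1}\bigl\{\abs{\bfs - \mathbf{A}\beta}^2 \leq D\bigr\}.
\ee
Set $\gamma^2 = \sigma^2 - D$ so each coordinate of any $\mathbf{A}\beta$ is $\mc{N}(0,\gamma^2)$, matching the codeword variance used in the classical i.i.d.\ Gaussian random codebook. By ergodicity, $\abs{\bfs}^2$ concentrates around $\sigma^2$, so I would condition on the typical event $T = \{\,\abs{\abs{\bfs}^2 - \sigma^2} \leq \e\,\}$ and aim to show $\expec[N^2 \mid \bfs]/(\expec[N \mid \bfs])^2 \to 1$ uniformly for $\bfs \in T$; the desired conclusion $P_e(\mc{C}_n,D) \to 0$ then follows from Paley--Zygmund and $P(T) \to 1$.

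\textbf{First moment.} For fixed $\beta \in \mcb$, $\mathbf{A}\beta \sim \mc{N}(0,\gamma^2 I_n)$ is independent of $\bfs$, so $\abs{\bfs - \mathbf{A}\beta}^2$ is a noncentral chi-square whose noncentrality parameter depends on $\bfs$ only through $\abs{\bfs}^2$ (this is precisely why the argument is robust to the source distribution). A Cram\'er-style tail estimate gives
\be
P\bigl(\abs{\bfs - \mathbf{A}\beta}^2 \leq D \,\big|\, \bfs\bigr) = e^{-n R^*(D) + o(n)}, \qquad R^*(D) = \tfrac{1}{2}\log(\sigma^2/D),
\ee
uniformly for $\bfs \in T$. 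Multiplying by $\abs{\mcb} = e^{nR}$ yields $\expec[N \mid \bfs] \doteq e^{n(R - R^*(D))}$, which is exponentially large precisely when $R > R^*(D)$; this accounts for the first branch of the max in $R_{sp}(D)$.

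\textbf{Second moment.} Enumerate ordered pairs $(\beta,\beta') \in \mcb^2$ by the number $k$ of sections in which they pick the same column. Conditional on $k$, the pair $(\mathbf{A}\beta, \mathbf{A}\beta')$ is jointly Gaussian with common per-coordinate variance $\gamma^2$ and correlation $k/L$; writing $\mathbf{A}\beta = \sqrt{k/L}\,\mathbf{U} + \sqrt{1-k/L}\,\mathbf{V}$ and $\mathbf{A}\beta' = \sqrt{k/L}\,\mathbf{U} + \sqrt{1-k/L}\,\mathbf{V}'$ for independent $\mathbf{U},\mathbf{V},\mathbf{V}' \sim \mc{N}(0,\gamma^2 I_n)$, the two distortion constraints become correlated quadratic forms whose joint probability $P_k$ has exponential rate $I(k/L)$, with $I(0) = 2 R^*(D)$ and $I(1) = R^*(D)$. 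Counting pairs gives
\be
\expec[N^2 \mid \bfs] = \sum_{k=0}^{L} M^L \binom{L}{k}(M-1)^{L-k}\, P_k,
\ee
and the $k = 0$ term already equals $(1 - 1/M)^L (\expec N)^2 \sim (\expec N)^2$.

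\textbf{Main obstacle.} The crux is to show that the terms with $k \geq 1$ are negligible compared to $(\expec N)^2$. Writing $\alpha = k/L$, invoking $L \log L \sim nR/b$ from \eqref{eq:rel_nL}, and Stirling-approximating the binomial, the exponential rate of $\expec[N^2]/(\expec N)^2$ is controlled by
\be
\max_{\alpha \in (0, 1]} n \bigl[\alpha R/b + I(\alpha) - 2R^*(D)\bigr],
\ee
up to prefactors that are polynomial in $L$. A Taylor expansion of $I$ at the origin gives $I(\alpha) - 2R^*(D) = -2\alpha(1 - D/\sigma^2) + O(\alpha^2)$, so the small-$\alpha$ boundary of the maximum forces $R > 1 - D/\sigma^2$, producing the second branch of $R_{sp}(D)$; the polynomial-in-$L$ prefactors dictate the explicit threshold $b > 2.5 R/(R - 1 + D/\sigma^2)$, which is exactly the gap needed to absorb them. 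Once this second-moment ratio is established, Paley--Zygmund conditional on $T$ together with $P(T) \to 1$ yields $P_e(\mc{C}_n, D) \to 0$, completing the proof.
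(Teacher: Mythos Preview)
Your strategy is exactly the paper's: apply the second-moment (Paley--Zygmund) inequality to the count $N(\bfs)$, stratify codeword pairs by the number $k$ of shared sections, control the single and joint ball-hitting probabilities by Cram\'er-type large deviations, and observe that the $\alpha\to 0$ boundary is what produces the extra condition $R>1-D/\sigma^2$ while the polynomial-in-$L$ prefactors determine the threshold on $b$.

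Two formulas in your sketch need correction, however. First, the exponent on the $\alpha$-term of $\expec[N^2\mid\bfs]/(\expec[N\mid\bfs])^2$ is $-n\bigl[\alpha R + I(\alpha) - 2R^*(D)\bigr]$, not $+n[\alpha R/b+\cdots]$: since $(M-1)^{-k}\approx e^{-\alpha nR}$ and $\log\binom{L}{L\alpha}=o(n)$, the parameter $b$ does \emph{not} enter the leading exponent at all. It matters only through the sub-exponential factors (the binomial coefficient and the $n^{-1/2}$ from Bahadur--Rao), and it is the careful bookkeeping of those factors at the extremal value $\alpha=1/L$ that actually pins down $b_{\min}=2.5R/(R-1+D/\sigma^2)$. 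Second, the correct linearization is $I(\alpha)-2R^*(D)=-\alpha(1-D/\sigma^2)+O(\alpha^2)$, without the factor $2$; with your factor $2$ the small-$\alpha$ condition would become $R>2(1-D/\sigma^2)$ rather than the stated $R>1-D/\sigma^2$.

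One further technical difference: rather than fix $\gamma^2=\sigma^2-D$ and condition on the two-sided event $\{\,\abs{\,\abs{\bfs}^2-\sigma^2\,}\le\e\,\}$, the paper sets $\gamma^2=\rho^2-D$ for an auxiliary $\rho^2\in(\sigma^2,a^2)$, conditions only on $\{\abs{\bfs}^2<\rho^2\}$, and shows the first- and second-moment estimates hold uniformly over $z^2\in(D,\rho^2)$ by proving that the relevant rate functions are monotone in $z^2$; letting $\rho^2\downarrow\sigma^2$ then yields the threshold on $b$ stated in the theorem.
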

\begin{IEEEproof} In Section \ref{sec:proof_rd_thm}. \end{IEEEproof}

\emph{Remarks}:
\begin{enumerate}
\item Theorem \ref{thm:rd_sparc} implies that SPARCs achieve the optimal rate-distortion function $R^*(D) = \frac{1}{2} \log \frac{\sigma^2}{D}$ for $0 < \frac{D}{\sigma^2}  < x^*$ where $x^* \approx 0.2032$ is the solution of the equation $1-x + \frac{1}{2} \log x =0$.  Equivalently, SPARCs with rate at least $1-x^* \approx 0.797$ nats ($1.15$ bits) achieve the optimal distortion-rate function with minimum-distance encoding.

\item For $x^*\leq \frac{D}{\sigma^2} \leq 1$, the minimum achievable rate of Theorem \ref{thm:rd_sparc} $(1-\frac{D}{\sigma^2})$  is larger than the optimal rate-distortion function. In this region, $R_{sp}(D)$ can also be achieved by time-sharing between the points $\frac{D}{\sigma^2}=x^*$ and $\frac{D}{\sigma^2}=1$.

\item The proofs of the results (Lemmas \ref{lem:EU1_asymp} and \ref{lem:Tz2} in particular) show that a SPARC designed to compress a source of variance $\sigma^2$ with distortion $D$ yields distortion less than or equal to $D$ with high probability on any source sequence whose empirical variance is less than or equal to $\sigma^2$.
\end{enumerate}

The rate-distortion performance described by Theorem \ref{thm:rd_sparc} is shown in Figure \ref{fig:sparc_perf}.

\begin{figure}[t]
\centering
\includegraphics[width=3.4in]{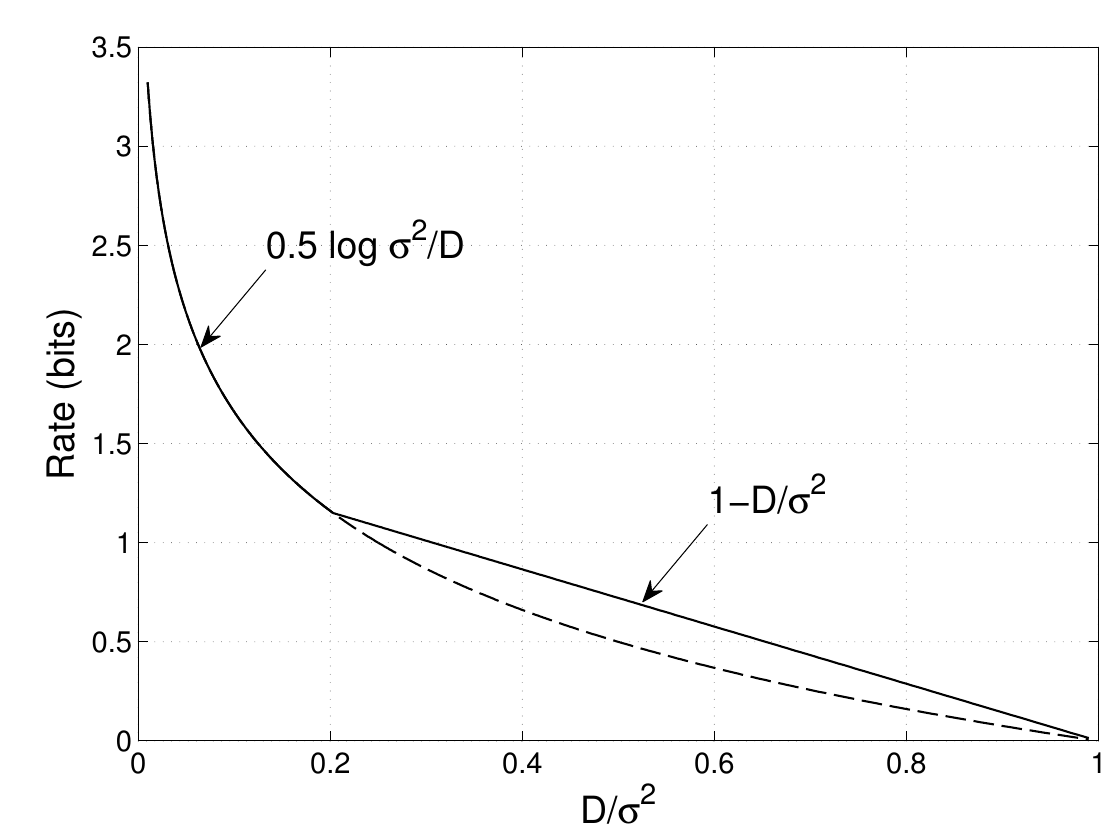}
\caption{\small The achievable rate $R_{sp}(D)$ of Theorem $1$. The dashed line shows the Shannon rate-distortion function $\frac{1}{2}\log\frac{\sigma^2}{D}$ which coincides with $R_{sp}(D)$ for $D/\sigma^2 < x^* \approx 0.203$.}
\label{fig:sparc_perf}
\end{figure}

\subsection{Error Exponent of SPARC}
We begin with some background on error exponents.
\begin{defi}
The error exponent at distortion-level $D$ of a sequence of rate $R$ codes $\{\mathcal{C}_n\}_{n=1,2,\ldots}$ is given by
\be r(D,R) = - \limsup_{n \to \infty} \frac{1}{n} \log  P_{e}(\mathcal{C}_n,D). \ee
The optimal error exponent for a rate-distortion pair $(R,D)$ is the supremum of the error exponents over all sequences of codes with rate  $R$, at distortion-level $D$.
\end{defi}

 For a given rate $R$ and distortion-level $D$, the error exponent describes the asymptotic behavior of the probability of error with growing block length; bounds on the probability of error for finite block lengths were obtained in \cite{SakrisonFin,IngberKochman,KostinaV12}. The optimal error exponent was obtained by Marton \cite{MartonRD74} for discrete memoryless sources and was extended to Gaussian %and stationary memoryless
 sources by Ihara and Kubo \cite{IharaKubo00}. %, IharaKubo05}.

\begin{fact} \cite{IharaKubo00}
\label{fact:ihara}
For an i.i.d Gaussian source distributed as $\mathcal{N}(0, \sigma^2)$ and  squared-error distortion criterion, the optimal error exponent at rate $R$ and distortion-level $D$ is
\be
r^*(D,R) = \left\{
\begin{array}{ll}
 \frac{1}{2} \left( \frac{y^2}{\sigma^2} - 1 - \log \frac{y^2}{\sigma^2} \right) & \quad R> R^*(D) \\
 0 & \quad R \leq R^*(D)
\end{array}
\right.
\label{eq:opt_exp}
\ee
where  \be y^2 = D e^{2R}. \label{eq:a2R} \ee
\end{fact}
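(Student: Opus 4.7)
The plan is to reduce the computation of $r^*(D,R)$ to a large-deviations question about the empirical second moment of the source, exploiting the fact that the Gaussian rate-distortion function at variance $t$ equals $\tfrac{1}{2}\log(t/D)$. Set $a^2 = D e^{2R}$ as in \eqref{eq:a2R}, so that $R$ is precisely the rate-distortion value for a Gaussian source of variance $a^2$; in particular $R > R^*(D)$ is exactly the condition $a^2 > \sigma^2$, which is what makes the claimed exponent strictly positive. The driving heuristic is that above this threshold the dominant way even the best code fails is that the realized empirical variance $T := \norm{\bfs}^2 / n$ exceeds $a^2$, since for any fixed $s^2 < a^2$ a rate-$R$ code can reach distortion $D$ with probability extremely close to one.

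For the achievability half (lower bound on $r^*(D,R)$), I would use a random i.i.d.\ Gaussian codebook of $e^{nR}$ codewords with each coordinate drawn from $\mc{N}(0,a^2 - D)$. Conditioning on $T = s^2$, the source is uniformly distributed on the sphere of radius $\sqrt{n s^2}$ by spherical symmetry. A standard Shannon-style covering argument then shows that, for every fixed $s^2 < a^2$, the probability that no codeword lies within squared distance $nD$ of $\bfs$ decays doubly exponentially in $n$. Integrating against the law of $T$, the error exponent of this code is bounded below by the large-deviations rate of the event $\{T \geq a^2\}$.

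For the converse (upper bound on $r^*(D,R)$), I would use a volume/sphere-packing estimate: for an arbitrary rate-$R$ code, the portion of the sphere $\{\mathbf{s}: \norm{\mathbf{s}}^2 = n s^2\}$ that can be covered by $e^{nR}$ balls of radius $\sqrt{nD}$ is exponentially small in $n$ whenever $s^2 > a^2$. Combined with the spherical symmetry of the source distribution conditional on $T$, this forces $P_e \geq P(T > a^2)$ up to subexponential factors, hence an upper bound on the exponent matching Step~1. Finally, a direct Chernoff calculation (equivalently Cram\'er's theorem) on $T = \frac{1}{n}\sum_i S_i^2$ with $S_i \sim \mc{N}(0,\sigma^2)$ i.i.d.\ yields the rate function: the cumulant generating function of $S_i^2/\sigma^2$ is $-\tfrac{1}{2}\log(1-2\lambda)$ on $\lambda < \tfrac{1}{2}$, whose Legendre transform is $\tfrac{1}{2}(u - 1 - \log u)$ for $u > 1$. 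Setting $u = a^2/\sigma^2$ recovers the formula in \eqref{eq:opt_exp}.

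The main obstacle is making the converse in Step~2 exponentially tight: one needs to rule out that some clever, non-random codebook covers an anomalously large fraction of spheres of radius slightly above $\sqrt{n a^2}$. The natural tool is a blowing-up or type-covering argument adapted to the continuous Gaussian setting along the lines of Ihara--Kubo, essentially a Gaussian analogue of Marton's discrete-memoryless converse, where one exploits the Gaussian measure of a distortion ball on a fixed-radius sphere and averages the code against the rotation group of $\mathbb{R}^n$ to remove any asymmetry before invoking the volume bound.
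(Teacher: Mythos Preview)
The paper does not actually prove this statement; it is stated as a \emph{Fact} with a citation to Ihara--Kubo \cite{IharaKubo00}, accompanied only by a one-paragraph heuristic explanation. That heuristic is exactly the one you give: the exponent is the KL divergence between $\mc{N}(0,a^2)$ and $\mc{N}(0,\sigma^2)$; at rate $R$ all source sequences with empirical variance below $a^2$ can be compressed to distortion $D$ with double-exponentially small failure probability, so the dominant error event is $\{\abs{\bfs}^2 \geq a^2\}$, whose large-deviations rate (via Cram\'er) is $\tfrac{1}{2}(a^2/\sigma^2 - 1 - \log(a^2/\sigma^2))$. Your achievability sketch, Cram\'er computation, and identification of the converse as the nontrivial part all match this picture. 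Since there is no proof in the paper to compare against, your proposal is not in conflict with it; it is a reasonable outline of the Ihara--Kubo argument the paper is invoking, and the one genuine technical gap you correctly flag---the exponentially tight sphere-covering converse---is precisely what the cited reference supplies.
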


For $R> R^*(D)$, the exponent in \eqref{eq:opt_exp} is the Kullback-Leibler divergence between two zero-mean Gaussian distributions, the first with variance $y^2$ and the second with variance $\sigma^2$. It is shown in \cite{IharaKubo00}  that at rate $R$, we can compress all sequences which have empirical variance less than $y^2$ to within distortion $D$ with \emph{double-exponentially} decaying probability of error. Consequently, the dominant error event is  obtaining a source sequence with empirical variance greater than $y^2$, which has exponent given by \eqref{eq:opt_exp}.

We now characterize the error exponent performance of SPARCs. As in Theorem \ref{thm:rd_sparc}, let
\be
R_{sp}(D) = \max\left\{ \frac{1}{2} \log \frac{\sigma^2}{D}, \ 1-\frac{D}{\sigma^2} \right\}.
\label{eq:Rsp_def}
\ee
\begin{thm}
Let $\bfs$ be drawn from an ergodic source with mean zero and variance $\sigma^2$,  and let $D \in (0, \sigma^2)$. For any rate $R >0$, define
\be
a^2 = \left\{
\begin{array}{ll}
D e^{2R} & \quad R \geq 1-x^* \\
\frac{D}{1-R} & \quad 0 < R < 1-x^*
\end{array}
\right.
\label{eq:a2def}
\ee
where $x^* \approx 0.2032$ is the solution of the equation $1-x + \frac{1}{2} \log x =0$.
Fix any $\e  \in (0, a^2)$ and let
\be b>\frac{3.5 R}{R- 1  + D/(a^2-\e)}. \label{eq:bmin_exp} \ee  Then there exists a sequence of rate $R$ SPARCs $\{\mc{C}_n\}_{n=1,2 \ldots}$ whose probability of error at distortion-level $D$ can be bounded as follows for all sufficiently large $n$. 
\be
P_e(\mc{C}_n, D)  \leq P(\abs{\bfs}^2 \geq a^2 - \e) + \exp\left( - \Theta(n^t) \right),
\ee
where $t$ is a constant strictly greater than $1$.

For each $n$, $\mathcal{C}_n$ is defined by an $n  \times L_n M_n$ design matrix with $L_n$ determined by \eqref{eq:rel_nL} and $M_n =L_n^b$.
\label{thm:err_exp}
\end{thm}
\begin{IEEEproof} In Section \ref{sec:proof_err_exp}. \end{IEEEproof}

\emph{Remark}: From  \eqref{eq:a2def}, $R$ can be expressed in terms of $D$ and $a^2$ as
\be
R = \max \left\{ \frac{1}{2} \log \frac{a^2}{D}, \ 1-\frac{D}{a^2}  \right\}.
\label{eq:R_a2}
\ee

\begin{corr}
Let $\bfs$ be drawn from an i.i.d Gaussian source with mean zero and variance $\sigma^2$. For any $\e \in (0,a^2)$, there exists a sequence of SPARCs with design matrix parameter $b$ satisfying \eqref{eq:bmin_exp} that achieves the error exponent
\be
r_{sp}(D,R) = \left\{
\begin{array}{ll}
 \frac{1}{2} \left( \frac{a^2 - \e}{\sigma^2} - 1 - \log \frac{a^2-\e}{\sigma^2} \right) & \quad R> R_{sp}(D) \\
 0 & \quad R \leq R_{sp}(D)
\end{array}
\right.
\label{eq:rsp_exp}
\ee
where $a^2$ is given by \eqref{eq:a2def}.

In particular, SPARCs attain the optimal error exponent for all rates greater than $1-x^* \approx 0.797$ nats ($1.15$ bits).
\label{corr:err_exp}
\end{corr}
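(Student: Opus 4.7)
The plan is to specialize Theorem~\ref{thm:err_exp} to the i.i.d.\ Gaussian source and to evaluate the large-deviation exponent $\liminf_{n\to\infty}-\frac{1}{n}\log P(\abs{\bfs}^2 \ge a^2)$ in closed form. Theorem~\ref{thm:err_exp} does all the hard work of controlling the SPARC covering failure; what remains is a straightforward Cram\'er-type calculation for the empirical second moment of i.i.d.\ $\mathcal{N}(0,\sigma^2)$ variables.

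Concretely, first I would observe that for $\bfs$ with i.i.d.\ $\mathcal{N}(0,\sigma^2)$ entries, $\abs{\bfs}^2 = \frac{1}{n}\sum_{i=1}^n S_i^2$ is an i.i.d.\ average with marginal distribution $\sigma^2 \chi^2_1$. The moment generating function of $S_1^2$ is $\expec[e^{\lambda S_1^2}] = (1-2\lambda\sigma^2)^{-1/2}$ for $\lambda < 1/(2\sigma^2)$, so the cumulant generating function is $\Lambda(\lambda) = -\tfrac{1}{2}\log(1-2\lambda\sigma^2)$. Computing the Legendre transform by setting $\Lambda'(\lambda)=x$, one finds the optimizer $\lambda^* = \tfrac{1}{2\sigma^2} - \tfrac{1}{2x}$ (valid for $x>0$), and substituting gives the rate function
\[
I(x) \;=\; \sup_\lambda\bigl(\lambda x - \Lambda(\lambda)\bigr) \;=\; \tfrac{1}{2}\!\left(\tfrac{x}{\sigma^2} - 1 - \log\tfrac{x}{\sigma^2}\right),
\]
which is the announced expression when evaluated at $x=a^2$.

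Second, I would invoke Cram\'er's theorem to conclude $\lim_{n\to\infty}-\tfrac{1}{n}\log P(\abs{\bfs}^2 \ge a^2) = I(a^2)$, which requires $a^2 > \sigma^2$ (so that the event is a genuine upper tail and the convex rate function is finite and positive). I would verify this in both regimes of \eqref{eq:a2def}. In the regime $R \ge 1-x^*$, $a^2 = De^{2R}$, and the condition $a^2>\sigma^2$ reduces to $R > \tfrac{1}{2}\log(\sigma^2/D)$, which is implied by $R>R_{sp}(D) \ge \tfrac{1}{2}\log(\sigma^2/D)$. In the regime $0<R<1-x^*$, $a^2 = D/(1-R)$, and $a^2 > \sigma^2$ reduces to $R > 1-D/\sigma^2$, which is implied by $R>R_{sp}(D) \ge 1-D/\sigma^2$. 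Combined with Theorem~\ref{thm:err_exp}, this yields the displayed expression for $r_{sp}(D,R)$ for all $R>R_{sp}(D)$.

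Finally, for the optimality claim, I would note that for $R>1-x^*$ the SPARC uses $a^2 = De^{2R}$, identical to the value in Fact~\ref{fact:ihara}; since $f(x)=1-x+\tfrac{1}{2}\log x$ is negative on $(0,x^*)$ and zero at $x^*$, one checks that for $D/\sigma^2\le x^*$ we have $R_{sp}(D) = \tfrac{1}{2}\log(\sigma^2/D) = R^*(D)$, while for $D/\sigma^2 > x^*$ we have $R_{sp}(D) = 1-D/\sigma^2 < 1-x^* < R$, so in every case $R>R_{sp}(D)$ is implied by $R>1-x^*$ together with $R>R^*(D)$. Hence the SPARC exponent coincides with $r^*(D,R)$ of Fact~\ref{fact:ihara}. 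There is no real obstacle here beyond bookkeeping; the one point that needs care is checking $a^2>\sigma^2$ in both pieces of \eqref{eq:a2def} so that the Cram\'er limit is the strictly positive Gaussian-vs.-Gaussian KL divergence rather than zero.
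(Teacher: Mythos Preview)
Your proposal is correct and follows the same approach as the paper: specialize Theorem~\ref{thm:err_exp} and apply Cram\'er's theorem to $\abs{\bfs}^2$ for an i.i.d.\ Gaussian source, obtaining the Gaussian--Gaussian KL rate function $\tfrac{1}{2}(a^2/\sigma^2 - 1 - \log(a^2/\sigma^2))$. The paper's proof is a one-line invocation of Cram\'er, whereas you additionally spell out the Legendre transform and verify $a^2>\sigma^2$ and the optimality bookkeeping, but this is just added detail, not a different argument.
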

\begin{IEEEproof}
From Theorem \ref{thm:err_exp}, we know that for any $\e \in (0,a^2)$, there exists a sequence of rate $R$ SPARCs $\{ \mc{C}_n\}$ for which
\be
P_e(\mc{C}_n, D)  \leq P(\abs{\bfs}^2 \geq a^2 - \e) \left(1 + \frac{\exp( - \Theta(n^t))}{P(\abs{\bfs}^2 \geq a^2-\e)}\right)
\label{eq:pe_rew}
\ee
for sufficiently large $n$, where $t >1$. For $\bfs$ that is  i.i.d $\mc{N}(0, \sigma^2)$, Cram{\'e}r's large deviation theorem \cite{Den2008LD,DemboZbook} yields
\be \lim_{n \to \infty} - \frac{1}{n} \log P(\abs{\bfs}^2 \geq y^2) =  \frac{1}{2} \left( \frac{y^2}{\sigma^2} -1 - \log \frac{y^2}{\sigma^2} \right) \label{eq:gaussian_ld} \ee
for $y^2 > \sigma^2$. Thus in \eqref{eq:pe_rew},  $P(\abs{\bfs}^2 \geq a^2-\e)$ thus decays exponentially with $n$ for $(a^2-\e) > \sigma^2$, while $\exp( - \Theta(n^t))$ decays \emph{faster} than exponentially since $t>1$. Therefore,  for $(a^2-\e) > \sigma^2$ we have
\be
\begin{split}
& \liminf_{ n \to \infty} \, \frac{-1}{n} \log P_e(\mc{C}_n, D) \\
&    \geq  \liminf_{ n \to \infty}  \frac{-1}{n}\Bigg[ \log P(\abs{\bfs}^2 \geq a^2-\e) \\
& \qquad \qquad   \qquad + \log \Big(1 + \frac{\exp( - \Theta(n^t))}{P(\abs{\bfs}^2 \geq a^2-\e)}\Big) \Bigg] \\
& = \frac{1}{2} \left( \frac{a^2 - \e}{\sigma^2} -1 - \log \frac{a^2-\e}{\sigma^2} \right).
\end{split}
\ee
Since $\e >0$ is arbitrary, we conclude that the error exponent in \eqref{eq:rsp_exp} can be achieved.
\end{IEEEproof}

From \eqref{eq:bmin_exp}, we note that  larger values of the design parameter $b$ are required to achieve  error exponents closer to the optimal value (i.e., smaller values of $\e$ in Corollary \ref{corr:err_exp}).

%%%%%%%%%Two-column equation %%%%%%%%%%%%%%
\begin{figure*}[b]
\vspace*{0pt} \hrulefill
\normalsize
\setcounter{mytempeqncnt}{\value{equation}} % Store the current equation number.
\setcounter{equation}{25}
\begin{equation}
\begin{split}
&P( \mc{E}(\bfs) \mid \abs{\bfs}^2 = z^2 ) =  1 - P\left( \sum_{i} U_i > 0 \mid \abs{\bfs}^2 = z^2 \right)\\
&< \ \frac{\expec[U_1 \mid \abs{\bfs}^2 = z^2] \ + \
\sum_{r=1}^{L-1} {L \choose r} (M-1)^{L-r} \expec[U_1 U_j \mid \mathcal{F}_{1j}(r), \abs{\bfs}^2 = z^2 ]}{\expec[U_1 \mid \abs{\bfs}^2 = z^2]   +
(M-1)^L (\expec[U_1 \mid \abs{\bfs}^2 = z^2])^2 \ + \
\sum_{r=1}^{L-1} {L \choose r} (M-1)^{L-r} \expec[U_1 U_j \mid \mathcal{F}_{1j}(r), \abs{\bfs}^2 = z^2 ] }.
\end{split}
\label{eq:pe_long_bound1}
\end{equation}
%Restore the current equation number. done after next equation
\setcounter{equation}{\value{mytempeqncnt}}
\end{figure*}
%%%%%%%%%%%%%%%%%%%%%%%%%%%%%%%%%%%

\section{Proof of Theorem \ref{thm:rd_sparc}} \label{sec:proof_rd_thm}
Fix a rate $R > R_{sp}(D)$, and $b$ greater than the minimum value specified by the theorem.  Let $a^2$ be defined by \eqref{eq:a2def}. Since  $R > R_{sp}(D)$, it follows by comparing the expressions in \eqref{eq:Rsp_def} and \eqref{eq:R_a2} that  $a^2 > \sigma^2$.
%\[ \frac{1}{2}\log\frac{a^2}{D} =  = \frac{1}{2}\log\frac{\sigma^2}{D}. \]
Let $\rho^2$ be any number such that $\sigma^2 < \rho^2 < a^2$.

\emph{Code Construction}:
 For each block length $n$, pick $L$ as specified by \eqref{eq:rel_nL} and $M=L^b$.
Construct an $n \times ML$ design matrix $\mathbf{A}$ with entries drawn i.i.d $\mathcal{N}(0, 1)$. The codebook consists of all vectors $\mathbf{A} \beta$ such that $\beta \in \mcb$ and the non-zero entries of $\beta$ are all equal to $\frac{\gamma}{\sqrt{L}}$ where $\gamma^2 = \rho^2 -D$.

\emph{Encoding and Decoding}: If the source sequence $\bfs$ is such that $\abs{\bfs}^2 \geq \rho^2$, then the encoder declares an error. Else, it finds
\[ \hat{\beta} \triangleq g(\bfs) = \underset{\beta \in \mcb}{\operatorname{argmin}} \ \norm{\bfs - \mathbf{A}\beta}^2. \]
The decoder receives $\hat{\beta}$ and reconstructs $\bfsh = \mathbf{A} \hat{\beta}$.

\emph{Error Analysis}:
Denoting the probability of error for this random code by $P_{e,n}$, we have
\be
\begin{split}
P_{e,n}& \leq   1 \cdot  P(\abs{\bfs}^2 \geq \rho^2) +   \int_{0}^{\rho^2} P(\mc{E}(\bfs) \mid \abs{\bfs}^2 = z^2 ) d\nu(z^2) \\
%P(\mc{E}(\bfs)\mid \abs{\bfs}^2 < a^2 )  P(\abs{\bfs}^2 < a^2)\\
&  \leq P(\abs{\bfs}^2 \geq \rho^2) +  \max_{z^2 \in (D, \rho^2)} P(\mc{E}(\bfs) \mid \abs{\bfs}^2 = z^2 ).
\end{split}
\label{eq:err_bound}
\ee
where $\mc{E}(\bfs)$ is the event that the minimum of $\abs{\bfs - \mathbf{A}\beta}^2$ over $\beta \in \mcb$ is greater than $D$, and   $\nu(\abs{\bfs}^2)$ is the  distribution of the random variable $\abs{\bfs}^2$.  The maximum in the second term  can be restricted to $z^2 \in (D, \rho^2)$ since source sequences $\bfs$ with empirical second moment less than $D$ can be trivially compressed using the all-zero codeword. The addition of this extra codeword to the codebook affects the rate in a negligible way.

Since $\rho^2> \sigma^2$, the ergodicity of the source guarantees that
\be
\lim_{n \to \infty} P(\abs{\bfs}^2 \geq \rho^2) = 0.
\label{eq:s_erg}
\ee
The remainder of the proof is devoted to bounding the second term in \eqref{eq:err_bound}. Recall that
 \be
 \begin{split}
 & P\left(\mc{E}(\bfs) \mid \abs{\bfs}^2 = z^2 \right) \\
 &= P(\abs{\bfsh(i) - \bfs}^2 >  D, \ i=1,\ldots,e^{nR} \mid \abs{\bfs}^2= z^2 )
 \end{split}
 \label{eq:peps}
 \ee
 where $\bfsh(i)$ is the $i$th codeword in the sparse regression codebook. Define indicator random variables
 $U_i(\bfs)$ for $i=1,\ldots, e^{nR}$ as follows:
 \be
U_i(\bfs) = \left\{
\begin{array}{ll}
1 & \text{ if } \abs{\bfsh(i) - \bfs}^2 \leq D,\\
0 & \text{ otherwise}.
\end{array} \right.
\label{eq:ui_def}
\ee
From \eqref{eq:peps}, it is seen that
\be
P( \mc{E}(\bfs) \, |  \, \abs{\bfs}^2 = z^2 ) = P\Big(\sum_{i=1}^{e^{nR}} U_i(\bfs) =0 \, | \, \abs{\bfs}^2 = z^2 \Big).
\label{eq:sum_ui}
\ee

For a fixed $\bfs$,  the $U_i(\bfs)$'s are dependent. Suppose that the codewords $\bfsh(i),\bfsh(j)$ respectively correspond to the binary vectors $\hat{\beta}(i), \hat{\beta}(j) \in \mcb$. Recall that each vector in $\mcb$ is uniquely defined by the position of the non-zero value in each of the $L$ sections.   If $\hat{\beta}(i)$ and $\hat{\beta}(j)$ overlap in $r$ of their non-zero positions, then the column sums forming codewords $\bfsh(i)$ and $\bfsh(j)$ will share $r$ common terms.
%\footnote{The terminology $\bfsh(i)$ and  $\bfsh(j)$ share $r$ common terms means that the ones in the corresponding $\hat{\beta}_i$ and $\hat{\beta}_j$ overlap in exactly $r$ positions.}
For each codeword $\bfsh(i)$, there are ${L \choose r} (M-1)^{L-r}$ other codewords which share exactly $r$ common terms with $\bfsh(i)$, for $0\leq r \leq L-1$. In particular, there are $(M-1)^L$ codewords that are pairwise independent of $\bfsh(i)$.

We now obtain an upper bound for the probability in \eqref{eq:sum_ui} using the second moment method \cite{JansonBook}. For any non-negative random variable $X$, the second moment method lower bounds the probability of the event $X > 0$ as
\be
P(X > 0) \geq {(\expec X)^2}{\big{/}\,}{\expec [ X^2]}.
\label{eq:2nd_mom_method}
\ee
\eqref{eq:2nd_mom_method} follows from the Cauchy-Schwarz inequality $ (\expec[X Y])^2 \leq \expec[X^2] \expec[Y^2]$ by substituting $Y=\mathbf{1}_{\{X>0\}}$.
Applying \eqref{eq:2nd_mom_method} with $X=\sum_{i=1}^{e^{nR}} U_i(\bfs)$,  we have
\be
\begin{split}
& P\Big( \sum_{i=1}^{e^{nR}} U_i(\bfs)> 0 \mid \abs{\bfs}^2 = z^2 \Big) \\
 & \geq {\Big(\expec[\sum_{i} U_i(\bfs) | \, \abs{\bfs}^2 = z^2]\Big)^2} \Big{/} \,
  {\expec\Big[\big(\sum_{i} U_i(\bfs)\big)^2 | \, \abs{\bfs}^2 = z^2\Big]} \\
&{=} \, {e^{2nR} \left(\expec[ U_1(\bfs) \mid \abs{\bfs}^2 = z^2]\right)^2} \Big{/}
\Big( e^{nR} \expec\left[U_1(\bfs) \mid \abs{\bfs}^2 = z^2 \right] \\
& \hspace{1.05in} + \ e^{nR} \sum_{j \neq 1}\expec\left[U_1(\bfs) U_j(\bfs) \mid \abs{\bfs}^2 = z^2 \right] \Big)
\end{split}
\label{eq:2nd_mom_chain1}
\ee
where the equality  is due to the symmetry of the code construction. For brevity, from here on we suppress the dependence of the $U$'s on $\bfs$.

Using $\mathcal{F}_{ij}(r)$ to denote the event that $\bfsh(i),\bfsh(j)$ share $r$ common terms, \eqref{eq:2nd_mom_chain1} can be written as
\be
P\left( \sum_{i} U_i > 0 \mid \abs{\bfs}^2 = z^2 \right) \geq \frac{e^{nR} \left(\expec[ U_1 \mid \abs{\bfs}^2 = z^2]\right)^2}{\text{Denom}}
\label{eq:2nd_mom_chain2}
\ee
where the denominator is given by
\be
\begin{split}
& \text{Denom}  = \expec[U_1 \mid \abs{\bfs}^2 = z^2] \\
& \qquad +  \sum_{r=0}^{L-1} {L \choose r} (M-1)^{L-r} \expec[U_1 U_j \mid \mathcal{F}_{1j}(r), \abs{\bfs}^2 = z^2] \\
& =  \expec[U_1 \mid \abs{\bfs}^2 = z^2] \, + \, (M-1)^L (\expec[U_1 \mid \abs{\bfs}^2 = z^2])^2  \\
& \quad + \sum_{r=1}^{L-1} {L \choose r} (M-1)^{L-r} \expec[U_1 U_j \mid \mathcal{F}_{1j}(r), \abs{\bfs}^2 = z^2].
\end{split}
\ee
Using \eqref{eq:2nd_mom_chain2}, the probability in \eqref{eq:sum_ui} can be bounded as shown in \eqref{eq:pe_long_bound1} at the bottom of this page.
\addtocounter{equation}{1}
Dividing the numerator and denominator of the RHS of \eqref{eq:pe_long_bound1} by \[(M-1)^L (\expec[U_1 \mid \abs{\bfs}^2 = z^2])^2, \] we obtain
\be
\begin{split}
&P( \mc{E}(\bfs) \mid \abs{\bfs}^2 = z^2 ) \\
&\stackrel{(a)}{<} \  \frac{\left( (M-1)^L \expec[ U_1 \mid \abs{\bfs}^2 = z^2] \right)^{-1} + T(z^2)}
{1 + \left( (M-1)^L \expec[U_1 \mid \abs{\bfs}^2 = z^2 ] \right)^{-1} + T(z^2) }
\end{split}
\label{eq:2nd_mom_chain3}
\ee
where %$T(z^2)$ is defined as
\be
T(z^2) \triangleq  \frac{\sum_{r=1}^{L-1} {L \choose r} (M-1)^{L-r} \expec\left[U_1 U_j | \mathcal{F}_{1j}(r), \abs{\bfs}^2 = z^2 \right]}
{(M-1)^L \left(\expec[U_1 \mid \abs{\bfs}^2 = z^2]\right)^2}.
\label{eq:tz2_def}
\ee
\eqref{eq:2nd_mom_chain3} implies that  $P(\mc{E}(\bfs) \mid \abs{\bfs}^2 = z^2 )$ will go to $0$ as $n \to \infty$ if
\begin{enumerate}
\item $(M-1)^L \expec[ U_1 \mid \abs{\bfs}^2 = z^2]  \to \infty$ as $n \to \infty$ and

\item $T(z^2) \to 0$ as  $n \to \infty$.
\end{enumerate}
We now derive the conditions under which both of the above hold.
\begin{lem}
%\begin{description}
(a) $(M-1)^L \, \expec [ U_1 \, | \, \abs{\bfs}^2 = z^2] \  \stackrel{n \to \infty}{\longrightarrow} \ \infty$  if
\be
R > f(z^2) \triangleq \frac{D+z^2}{2\gamma^2} - \frac{Dz^2}{A(z^2) \gamma^2} - \frac{A(z^2)}{4\gamma^2} -\frac{1}{2} \ln\frac{A(z^2)}{2z^2}
\label{eq:fz2_cond}
\ee
where $\gamma^2= \rho^2-D $ and $A(z^2) \triangleq \sqrt{(\gamma^4 + 4z^2D)} - \gamma^2$.

(b) $(M-1)^L \expec [ U_1  | \, \abs{\bfs}^2 = z^2] \  \stackrel{n \to \infty}{\longrightarrow} \ \infty$ for all $z^2 \in (D, \rho^2)$  since
$R> \frac{1}{2} \log \frac{\rho^2}{D}$.  
%\end{description}
\label{lem:EU1_asymp}
\end{lem}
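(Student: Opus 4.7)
The plan is to reduce $\expec[U_1 \mid \abs{\bfs}^2 = z^2]$ to a one-dimensional large deviations problem and then invoke the G\"artner--Ellis theorem. Since the columns of $\mathbf{A}$ are i.i.d.\ $\mathcal{N}(0, I_n)$ and the non-zero entries of every $\beta \in \mcb$ equal $\gamma/\sqrt{L}$, the codeword $\bfsh(1) = \mathbf{A}\beta(1)$ is marginally $\mathcal{N}(0, \gamma^2 I_n)$. Writing $\bfsh(1) = \gamma \mathbf{G}$ with $\mathbf{G} \sim \mathcal{N}(0, I_n)$ and using rotational invariance to take $\bfs = \sqrt{n z^2}\, \mathbf{e}_1$, one obtains
\ben
\expec[U_1 \mid \abs{\bfs}^2 = z^2] = P\bigl( V_1 + V < n D \bigr), \quad V \sim \gamma^2 \chi^2_{n-1}, \ V_1 = (\gamma G_1 - \sqrt{n z^2})^2, \ V_1 \perp V.
\een
A short calculation then gives the limiting log-MGF
\ben
\Lambda(-\lambda) \triangleq \lim_n \tfrac{1}{n} \log \expec\bigl[ e^{-\lambda (V_1 + V)} \bigr] = -\tfrac{1}{2}\log(1 + 2 \gamma^2 \lambda) - \frac{\lambda z^2}{1 + 2 \gamma^2 \lambda}.
\een
Since $D < \gamma^2 + z^2 = \lim_n \expec[(V_1 + V)/n]$ and $\Lambda$ is smooth on the interior of its effective domain, G\"artner--Ellis yields $-\tfrac{1}{n} \log P(V_1 + V < nD) \to I(D)$, the Legendre transform of $\Lambda$ evaluated at $D$. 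Setting the $\lambda$-derivative of $-\lambda D - \Lambda(-\lambda)$ to zero reduces, with $u = 1 + 2 \gamma^2 \lambda$, to the quadratic $(D/\gamma^2) u^2 - u - z^2/\gamma^2 = 0$, whose positive root is $u^\star = 2 z^2 / A(z^2)$. Substituting back and simplifying via the identity $A(z^2)^2 + 2 \gamma^2 A(z^2) = 4 D z^2$ recovers exactly $I(D) = f(z^2)$ as in \eqref{eq:fz2_cond}. Combined with $(M-1)^L = M^L (1 - 1/M)^L = e^{n R (1 + o(1))}$ (since $L/M = L^{1-b} \to 0$), this gives $\log \bigl[ (M-1)^L \expec U_1 \bigr] = n [R - f(z^2)](1 + o(1))$, which diverges whenever $R > f(z^2)$.

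\textbf{Approach for (b).} It suffices to show that $f(z^2) \leq \tfrac{1}{2} \log(\rho^2 / D)$ on $(D, \rho^2)$, with equality at $z^2 = \rho^2$. For the boundary value: plugging $\gamma^2 = \rho^2 - D$ into the defining relation for $A$ gives $\gamma^4 + 4 D \rho^2 = (\rho^2 + D)^2$, hence $A(\rho^2) = 2D$, and the three algebraic terms in \eqref{eq:fz2_cond} cancel exactly, leaving $f(\rho^2) = \tfrac{1}{2} \log(\rho^2 / D)$. For monotonicity of $f$ on $(0, \rho^2)$, I plan to differentiate with respect to $z^2$, using the implicit relation for $A$ to simplify; alternatively, the probabilistic reading $f(z^2) = I_{z^2}(D)$ together with the fact that the mean $\gamma^2 + z^2$ of $Y_n = (V_1+V)/n$ is strictly increasing in $z^2$ --- moving it further above $D$ and so making the downward deviation rarer --- gives the same conclusion via convex duality. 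Then $R > \tfrac{1}{2} \log(\rho^2/D) = f(\rho^2) \geq f(z^2)$ for every $z^2 \in (D, \rho^2)$, and part (a) applies.

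\textbf{Main obstacle.} I expect the most error-prone step to be the algebraic simplification that turns the Legendre transform at its optimizer into the particular closed form \eqref{eq:fz2_cond}; the identity $A^2 + 2 \gamma^2 A = 4 D z^2$ is the single substitution that makes the stray $1/2$ and $A/(4 \gamma^2)$ terms cancel, and it is easy to botch in intermediate steps. A secondary subtlety is that part (b) wants the conclusion of (a) uniformly in $z^2 \in (D, \rho^2)$ (because it is quoted inside the maximum in \eqref{eq:err_bound}); this follows from continuity of $f$ and the fact that $R - f$ is bounded away from zero on the closure $[D, \rho^2]$, but is worth flagging so the uniformity is not silently assumed.
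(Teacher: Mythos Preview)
Your proposal is correct and follows essentially the same route as the paper: rotational invariance plus a large-deviations computation for (a), then monotonicity of $f$ together with the endpoint evaluation $f(\rho^2)=\tfrac12\log(\rho^2/D)$ for (b). The only cosmetic difference is that the paper rotates $\bfs$ to the constant vector $(z,\ldots,z)$, making the summands $(\hat S_k(1)-z)^2$ i.i.d.\ so that Cram\'er's theorem applies directly (rather than G\"artner--Ellis after your rotation to $\sqrt{nz^2}\,\mathbf{e}_1$), and for (b) it just verifies $f'(z^2)>0$ explicitly rather than appealing to convex duality.
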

\begin{IEEEproof}
The first part is an application of Cram{\'e}r's large deviation theorem \cite{DemboZbook}. We have
\be
\begin{split}
& \expec [ U_1 \mid \abs{\bfs}^2 = z^2]  = P(\abs{\bfsh(1) - \bfs}^2 \leq  D \mid \abs{\bfs}^2 = z^2) \\
&  = P\left(\frac{1}{n}\sum_{k=1}^n (\hat{S}_k(1)- S_k)^2 \leq  D  \mid  \abs{\bfs}^2 = z^2\right) \\
& = P\left(\frac{1}{n}\sum_{k=1}^n (\hat{S}_k(1) - z)^2 \leq  D\right)
\end{split}
\label{eq:EU1_simp}
\ee
where the last equality is due to the rotational invariance of the distribution of $\bfsh(1)$, i.e.,  $\bfsh(1)$ has the same joint distribution as $\mathbf{O}\bfsh(1)$ for any orthogonal (rotation) matrix $\mathbf{O}$. The large deviation rate-function for the right-side of \eqref{eq:EU1_simp} can be computed from Cram{\'e}r's theorem to obtain
\be
\begin{split}
& \lim_{n \to \infty} -\frac{1}{n} \log \expec [ U_1 \mid \abs{\bfs}^2 = z^2] \\
& =  \frac{D+z^2}{2\gamma^2} - \frac{Dz^2}{A(z^2)\gamma^2} - \frac{A(z^2)}{4\gamma^2} -\frac{1}{2} \ln\frac{A(z^2)}{2z^2} = f(z^2)
\label{eq:EU1_ld}
\end{split}
\ee
where $A(z^2) \triangleq \sqrt{(\gamma^4 + 4z^2D)} - \gamma^2$. Since $M^L = e^{nR}$, \eqref{eq:EU1_ld} implies that 
\be M^L \, \expec [ U_1 \, | \, \abs{\bfs}^2 = z^2] \  \stackrel{n \to \infty}{\rightarrow} \ \infty \ee
if $R > f(z^2)$. Part (a) of the lemma follows by noting that
\be
\frac{(M-1)^L}{M^L} = (1 - L^{-b})^L \ \stackrel{n \to \infty}{\rightarrow} \ 1
\ee because $b>1$. 

For part (b),  we show that $f(z^2)$ (the RHS of \eqref{eq:fz2_cond}) is increasing in $z^2$. Indeed its derivative with respect to $z^2$ is equal to
\[ \frac{1}{2\gamma^2} \left(1 -\frac{2D}{\sqrt{\gamma^4 + 4 z^2 D}}\right) +  \frac{A(z^2)}{4z^2 \sqrt{\gamma^4 + 4 z^2 D}} ,\]
which is strictly positive for $z^2 \geq D$.  Therefore the maximum value of the RHS in the interval $(D, \rho^2)$ is attained at $z^2 =\rho^2$ where it is equal to $\frac{1}{2} \log \frac{\rho^2}{D}$.
\end{IEEEproof}
\begin{lem}
$T(z^2) \ \stackrel{n \to \infty}{\longrightarrow} \ 0$  for all $z^2 \in (D, \rho^2)$ if
\be
b > \frac{2.5 R}{R - \left(1 - D/\rho^2 \right)}.
\label{eq:brd_cond}
\ee
\label{lem:Tz2}
\end{lem}
\begin{IEEEproof} In Appendix \ref{app:lemma}. \end{IEEEproof}

Using Lemmas \ref{lem:EU1_asymp} and \ref{lem:Tz2} in \eqref{eq:2nd_mom_chain3}, we see that for all $z^2 \in (D, \rho^2)$,
$P(\mc{E}(\bfs) \mid \abs{\bfs}^2 = z^2 ) \stackrel{n \to \infty}{\longrightarrow} 0$  provided $b$ satisfies the condition in \eqref{eq:brd_cond}. Together with \eqref{eq:s_erg}, this implies that $P_{e,n}$ in \eqref{eq:err_bound} goes to $0$ as $n \to \infty$. Since $\rho^2$ is an any number in the interval
$(\sigma^2, a^2)$, we can choose $\rho^2$  to be arbitrarily close to $\sigma^2$.   Thus we have shown that $P_{e,n} \to 0$ for any  rate $R > R_{sp}(D)$ as long as
\[  b > \frac{2.5 R}{R - \left(1 - D/\sigma^2 \right)}. \]
%%%%%%%%%%%%%%%%%%%%%%%%%%%%%%%%%%

\section{Proof of Theorem \ref{thm:err_exp}} \label{sec:proof_err_exp}
Fix a rate $R > R_{sp}(D)$, and choose a value of $b$ greater than the minimum  specified by the theorem.  Let $a^2$ be defined by \eqref{eq:a2def}, and
$\rho^2$ be any number such that $\sigma^2 < \rho^2 < a^2$.

 With the code construction, encoding and decoding as described in Section \ref{sec:proof_rd_thm}, the error exponent is obtained through a more refined analysis of the probability of error of the random sparse regression codebook.
We will use Suen's correlation inequality to obtain a bound on the tail probability of the second term in \eqref{eq:err_bound}. This bound is sharper than the one obtained in the previous section using the second moment method. Following the arguments in \eqref{eq:peps} -- \eqref{eq:sum_ui}, recall that we need to bound
\be
P( \mc{E}(\bfs) \mid \abs{\bfs}^2 = z^2 ) = P\left(\sum_{i=1}^{e^{nR}} U_i(\bfs) =0  \mid \abs{\bfs}^2 = z^2 \right).
\label{eq:sum_ui_new}
\ee

First, some definitions.
\begin{defi}
[Dependency Graphs  \cite{JansonBook}] Let $\{U_i\}_{i \in \mc{I}}$ be a family of random variables (defined on a common probability space). A dependency graph for $\{U_i\}$ is any graph $\Gamma$ with vertex set $V(\Gamma)= \mc{I}$ whose set of edges satisfies the following property: if $A$ and $B$ are two disjoint subsets of  $\mc{I}$ such that there are no edges with one vertex in $A$ and the other in $B$, then the families $\{U_i\}_{i \in A}$ and $\{U_i\}_{i \in B}$
are independent.
\end{defi}

\begin{fact}
\cite[Example $1.5$, p.11]{JansonBook}
Suppose $\{Y_\alpha\}_{\alpha \in \mc{A}}$ is a family of independent random variables, and each $U_i, i\in \mc{I}$ is a function of the variables
$\{Y_\alpha\}_{\alpha \in A_i}$ for some subset $A_i \subseteq \mc{A}$. Then the graph with vertex set $\mc{I}$ and edge set
$\{ij : A_i \cap A_j \neq \emptyset\}$ is a dependency graph for  $\{U_i\}_{i \in \mc{I}}$.
\label{fact:depgraph_ex}
\end{fact}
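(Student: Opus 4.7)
The plan is to verify the defining property of a dependency graph directly. I would fix two disjoint subsets $A, B \subseteq \mc{I}$ such that no edge of the constructed graph connects $A$ to $B$, and then argue that the families $\{U_i\}_{i \in A}$ and $\{U_j\}_{j \in B}$ are independent. The strategy has three short steps: translate the graph-theoretic hypothesis into a set-theoretic statement about the index sets $A_i$, lift it to disjointness of the underlying randomness, and then pass from independence of the $Y_\alpha$'s to independence of the $U_i$'s.

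First I would unpack the edge rule $\{ij : A_i \cap A_j \neq \emptyset\}$: the absence of edges between $A$ and $B$ means $A_i \cap A_j = \emptyset$ for every $i \in A$ and every $j \in B$. Defining the union sets $\mc{A}_A \triangleq \bigcup_{i \in A} A_i$ and $\mc{A}_B \triangleq \bigcup_{j \in B} A_j$, this pairwise disjointness upgrades immediately to $\mc{A}_A \cap \mc{A}_B = \emptyset$. Next, invoking the hypothesis that $\{Y_\alpha\}_{\alpha \in \mc{A}}$ is a family of independent random variables, the disjointness $\mc{A}_A \cap \mc{A}_B = \emptyset$ yields that the sub-families $\{Y_\alpha\}_{\alpha \in \mc{A}_A}$ and $\{Y_\alpha\}_{\alpha \in \mc{A}_B}$ are independent — the standard fact that disjoint blocks of an independent family generate independent $\sigma$-algebras.

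Finally, each $U_i$ with $i \in A$ is by assumption a measurable function of $\{Y_\alpha\}_{\alpha \in A_i} \subseteq \{Y_\alpha\}_{\alpha \in \mc{A}_A}$, hence measurable with respect to $\sigma(\{Y_\alpha\}_{\alpha \in \mc{A}_A})$; symmetrically for $\{U_j\}_{j \in B}$. Since measurable functions of independent $\sigma$-algebras are themselves independent, the families $\{U_i\}_{i \in A}$ and $\{U_j\}_{j \in B}$ are independent, which is precisely the dependency-graph property. There is no genuine obstacle here: the claim is essentially a bookkeeping exercise that packages the measure-theoretic observation ``functions of disjoint independent blocks are independent'' into graph-theoretic language, and the only place that calls for care is the upgrade from pairwise disjointness of the $A_i$'s to disjointness of the union sets $\mc{A}_A$ and $\mc{A}_B$.
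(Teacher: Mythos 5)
Your argument is correct and is the standard one for this fact; the paper itself does not prove it, but simply cites it from Janson, {\L}uczak and Ruci\'{n}ski, and the argument given in that reference is essentially the one you give (disjoint index blocks $\Rightarrow$ independent $\sigma$-algebras $\Rightarrow$ independent function families). Your flagged step, upgrading pairwise disjointness $A_i \cap A_j = \emptyset$ for $i \in A$, $j \in B$ to disjointness of the unions $\mc{A}_A \cap \mc{A}_B = \emptyset$, is indeed sound: any $\alpha$ in both unions would lie in some $A_i$ with $i \in A$ and some $A_j$ with $j \in B$, contradicting the pairwise disjointness.
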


The graph $\Gamma$ with vertex set $V(\Gamma) = \{1,\ldots,e^{nR} \}$  and edge set $e(\Gamma)$ given by
\[  \{ij:  i\neq j \text{ and } \bfsh(i),\bfsh(j) \text { share at least one common term} \} \] is a dependency graph for the family
$\{ U_i(\bfs)\}_{i=1}^{e^{nR}}$, for each fixed $\bfs$. This follows from Fact \ref{fact:depgraph_ex} by recognizing that each  $U_i$ is a function of a subset of the columns of the matrix $\mathbf{A}$ and the columns of $\mathbf{A}$ are picked independently in the code construction.
\label{rem:depgraph}

For a given codeword $\bfsh(i)$, there are ${L \choose r} (M-1)^{L-r}$ other codewords that have exactly $r$ terms in common with $\bfsh(i)$, for
$0 \leq r \leq (L-1)$. Therefore each vertex in the dependency graph for the family $\{U_i(\bfs)\}_{i=1}^{e^{nR}}$ is connected to
\[ \sum_{r=1}^{L-1} {L \choose r} (M-1)^{L-r} = M^L - 1 - (M-1)^L \]
other vertices.

\begin{fact}[Suen's Inequality \cite{JansonBook}] Let $U_i \sim \text{Bern}(p_i),  i\in \mc{I}$, be a finite family of Bernoulli random variables having a dependency graph $\Gamma$.  Write $i \sim j$ if $ij$ is an edge in $\Gamma$. Define
\ben
\begin{split}
\lambda  = \sum_{i\in \mc{I}} \expec U_i,
 \  \, \Delta = \frac{1}{2} \sum_{i \in \mc{I}} \sum_{j \sim i} \expec(U_i U_j),
\  \, \delta =  \max_{i\in \mc{I}} \sum_{k \sim i}  \expec U_k.
\end{split}
\een
Then
\[ P\left(\sum_{i \in \mc{I}} U_i =0\right) \leq \exp\left(-\min \left\{ \frac{\lambda}{2}, \frac{\lambda}{6\delta}, \frac{\lambda^2}{8\Delta}  \right\} \right). \]
\end{fact}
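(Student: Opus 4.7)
The plan is a Chernoff-type argument adapted to the dependency graph. Writing $X = \sum_{i \in \mc{I}} U_i$, Markov's inequality gives, for any $t \geq 0$,
\[ P(X = 0) \leq \expec[e^{-tX}] = \expec \prod_{i \in \mc{I}}(1 - u U_i), \qquad u = 1 - e^{-t} \in (0,1]. \]
In the independent case this product factors and yields $\prod_i(1 - u p_i) \leq e^{-u\lambda}$, recovering Chernoff. The task is therefore to quantify how much the dependencies push the MGF away from this baseline, using only $\Delta$ (aggregate neighbor correlation) and $\delta$ (local neighbor intensity).

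I would fix an enumeration $i_1, \ldots, i_{\abs{\mc{I}}}$ of $\mc{I}$ and telescope,
\[ P(X = 0) = \prod_{k}(1 - q_k), \qquad q_k = P\bigl(U_{i_k} = 1 \mid U_{i_j} = 0 \text{ for all } j < k\bigr), \]
so that $1 - x \leq e^{-x}$ gives $P(X=0) \leq \exp(-\sum_k q_k)$. The dependency-graph property peels off non-neighbor contributions from the conditioning: $U_{i_k}$ is independent of $\{U_{i_j}\}_{j<k,\, i_j \not\sim i_k}$, so Bayes combined with a Bonferroni estimate lets me lower bound $q_k$ by $p_{i_k}$ minus a correction that involves only earlier \emph{neighbors} of $i_k$. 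Summing over $k$ yields $\sum_k q_k \geq \lambda - 2\Delta$, producing the Janson-type bound $P(X=0) \leq \exp(-\lambda + 2\Delta)$ -- already strong enough to recover the $\lambda^2/(8\Delta)$ term after a mild reweighting.

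To obtain all three terms in Suen's $\min$, I would rerun the telescoped argument on $\expec\prod_i(1 - uU_i)$ with a free damping parameter $u \in (0,1]$, obtaining an MGF bound of the general shape
\[ \expec[e^{-tX}] \leq \exp\!\bigl(-u\lambda + C\, u^2 \Delta \cdot h(u\delta)\bigr), \]
with $h$ an explicit increasing function satisfying $h(0) = 1$. Three choices of $u$ then deliver the three regimes. Taking $u$ of order $\lambda/\Delta$ balances the linear and quadratic pieces and yields the exponent $\lambda^2/(8\Delta)$. Taking $u$ of order $1/\delta$ holds $h(u\delta)$ at a constant while extracting the full linear gain, yielding $\lambda/(6\delta)$. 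Letting $u \to 1$ (equivalently $t \to \infty$) collapses the MGF onto $P(X=0)$ itself and, via a direct estimate that uses the dependency structure to prevent double-counting inside each neighborhood, produces $\lambda/2$. The minimum over these three choices is Suen's bound.

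The main obstacle is the combinatorial step that turns the telescoped product into the clean exponential form $\exp\!\bigl(-u\lambda + Cu^2\Delta\, h(u\delta)\bigr)$ with constants tight enough for the advertised factors $2$, $6$, $8$. Expanding $\prod_i(1 - uU_i)$ is an alternating sum over subsets of $\mc{I}$ whose terms have no a priori sign or size control; the dependency graph must be used both to truncate the expansion at pair interactions and to absorb the higher-order contributions into the bounded multiplicative factor $h(u\delta)$. Tracking constants through this truncation, and simultaneously optimizing over the three distinct regimes of $u$, is where Suen's argument invests the most care and is the part I would execute with the most attention.
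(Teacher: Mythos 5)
The paper does not prove this statement: Suen's inequality is imported as a \emph{Fact} with a citation to \cite{JansonBook} (the Janson--\L uczak--Ruci\'{n}ski book and Janson's 1998 refinement of Suen), and the paper's contribution is to \emph{apply} it, not to reprove it. So there is no proof in the paper to compare against; what follows assesses your sketch on its own terms.

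There is a real gap at the central step. You telescope $P(X=0)=\prod_k(1-q_k)$ with $q_k=P(U_{i_k}=1\mid B_k)$, $B_k=\bigcap_{j<k}\{U_{i_j}=0\}$, and claim a Bonferroni estimate yields $q_k\geq p_{i_k}-\sum_{j<k,\,i_j\sim i_k}\expec(U_{i_k}U_{i_j})$ and hence $\sum_k q_k\geq\lambda-2\Delta$. But writing $q_k = P(U_{i_k}=1,B_k)/P(B_k)$ and applying Bonferroni in the numerator (splitting $B_k$ into the part $B_k'$ over non-neighbours, which factors off $U_{i_k}$ by the dependency-graph property, and the part over earlier neighbours) gives
\[
q_k \;\geq\; \frac{p_{i_k}\,P(B_k') - \sum_{j<k,\,i_j\sim i_k}\expec(U_{i_k}U_{i_j})}{P(B_k)}\,,
\]
and while $P(B_k')/P(B_k)\geq1$ helps the first term, the correction term is \emph{divided} by $P(B_k)$, which can be exponentially small and is not controlled by $\lambda$, $\Delta$, $\delta$ alone. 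Your sketch silently drops this normalization. This is precisely why the correct intermediate inequality (Janson's Theorem~1) reads
\[
P(X=0) \;\leq\; \exp\!\bigl(-\lambda + \Delta\, e^{2\delta}\bigr)
\]
(or, in the JLR book's form, $\Delta$ replaced by $\sum_{i<j,\,i\sim j}\expec(U_iU_j)\prod_{k\sim i\text{ or }k\sim j}(1-p_k)^{-1}$), with a $\delta$-dependent inflation of $\Delta$ rather than your $\delta$-free $\exp(-\lambda+2\Delta)$. That $\delta$-dependence is not cosmetic: it is the sole source of the $\lambda/(6\delta)$ term in the final three-way minimum, which would simply be absent if $\exp(-\lambda+2\Delta)$ were attainable.

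Your second pass, replacing $P(X=0)$ by $\expec\prod_i(1-uU_i)$, is on the right track conceptually but is not a ``rerun of the telescoped argument.'' The telescoping identity $P(\cap A_k)=\prod P(A_k\mid\cap_{j<k}A_j)$ is special to indicator products (the $u=1$ endpoint); it has no analogue for $\expec\prod_i(1-uU_i)$ with $u<1$. The standard route to introduce the free parameter is thinning: take $U_i' = U_iV_i$ with $V_i\sim\mathrm{Bern}(u)$ independent, note $P(X=0)\leq P(X'=0)=\expec\prod_i(1-uU_i)$, observe that $\lambda,\Delta,\delta$ scale to $u\lambda, u^2\Delta, u\delta$, and apply the (correct, $\delta$-dependent) intermediate bound to $X'$. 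The three regimes are then obtained by optimizing $u$ under the constraint $u\leq1$ and a constraint like $u\delta\leq1/3$ (to tame $e^{2u\delta}$), exactly the case analysis you gesture at. So the high-level architecture you describe is the right one, but the step you flag as ``where Suen invests the most care'' --- converting the telescoped product into a clean exponential with the right constants --- is not just care, it is where your sketch as written is incorrect, and the intermediate estimate you propose is both unproved and missing the $\delta$-dependence that the final theorem requires.
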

We now apply Suen's inequality  with the dependency graph specified above for $\{U_i(\bfs)\}_{i=1}^{e^{nR}}$ to compute an upper bound for \eqref{eq:sum_ui_new}.

\textbf{First term $\lambda/2$}: Due to the symmetry of the codebook, $\expec(U_i (\bfs))$ does not depend on $i$. For any fixed $\bfs$ with $\abs{\bfs}^2=z^2$, we have
\be
\begin{split}
\lambda  = \sum_{i=1}^{e^{nR}} \expec(U_i(\bfs)) & = \ e^{nR}   P(U_1(\bfs) =1 \mid  \abs{\bfs}^2 = z^2) \\
 & \stackrel{(a)}{\geq} e^{nR} \cdot \frac{\kappa}{\sqrt{n}}e^{-nf(z^2)}
\end{split}
\label{eq:term1_bound}
\ee
where $(a)$ holds for $n$ sufficiently large with $f(z^2)$ given by  \eqref{eq:fz2_cond} and $\kappa>0$ a generic constant.
$(a)$ is a sharper version of \eqref{eq:EU1_ld}, obtained using the strong version of Cram{\'e}r's large-deviation theorem by Bahadur and Rao \cite{BahadurR60}.
We thus have a lower bound on $\lambda$ for sufficiently large $n$.

\textbf{Second term ${\lambda}/(6\delta)$}: Due to the symmetry of the code construction,
\be
\begin{split}
\delta &\triangleq \max_{i\in \{1, \ldots, e^{nR} \}} \sum_{k \sim i}  P\left(U_k(\bfs) =1 \mid  \abs{\bfs}^2 = z^2 \right)\\
&= \sum_{k \sim i}  P\left(U_k(\bfs) =1 \mid  \abs{\bfs}^2 = z^2 \right) \quad \forall i \in \{1, \ldots, e^{nR} \}\\
&= \sum_{r=1}^{L-1} {L \choose r} (M-1)^{L-r} \cdot P\left(U_1(\bfs) =1 \mid  \abs{\bfs}^2 = z^2 \right) \\
&= \left(M^L - 1 - (M-1)^L\right) P\left(U_1(\bfs) =1 \mid  \abs{\bfs}^2 = z^2 \right).
\end{split}
\ee
Using this together with the fact that \[ \lambda = M^L P(U_1(\bfs) =1 \mid  \abs{\bfs}^2 = z^2), \]
we have
\ben
\begin{split}
\frac{\lambda}{\delta}&  = \frac{M^L}{M^L - 1 - (M-1)^L} \stackrel{(a)}{=} \frac{1}{1- L^{-bL} - (1- L^{-b})^L}  \\
& = \frac{1}{1- L^{-bL} - \left[(1- L^{-b})^{L^b}\right]^{L^{1-b}}}
\end{split}
\een
where $(a)$ is obtained by substituting $M=L^b$.
Since \[ (1- L^{-b})^{L^b} \to e^{-1} \text{ as } L \to \infty \text{ for } b>1,\]   we have $(1- L^{-b})^{L^b} > e^{-2}$ for  sufficiently large $L$. Using this, we get
\be \label{eq:lambdel_step}
\frac{\lambda}{\delta} \geq \frac{1} {1 - L^{-bL} - (e^{-2})^{L^{1-b}} }.
\ee
From Taylor's theorem, the key term
\be
\begin{split}
& e^{-2L^{1-b}}   = e^{-2L^{-(b-1)}}   \\
 & > 1 - 2L^{-(b-1)}  + \frac{1}{4} \left(2L^{-(b-1)}\right)^2 \ \text{for large enough $L$}.
\end{split}
\ee
Using this  in \eqref{eq:lambdel_step}, we obtain that for $b>1$ and sufficiently large $L$
\be
\begin{split}
\frac{\lambda}{\delta}& \geq \frac{1}{2L^{-(b-1)} -  L^{-2(b-1)}  -  L^{-bL}   }
\geq \frac{L^{b-1}}{2}.
\end{split}
\label{eq:term2_bound}
\ee

\textbf{Third Term $\lambda^2/(8\Delta)$}: We have
\be
\begin{split}
& \Delta  = {\frac{1}{2} \sum_{i=1}^{M^L} \sum_{j \sim i} \expec\left[U_i(\bfs) U_j(\bfs) \mid \abs{\bfs}^2 = z^2\right] } \\
&  = \frac{M^L}{2} \sum_{r=1}^{L-1} {L \choose r} (M-1)^{L-r} \expec\left[U_1 U_j \mid \mathcal{F}_{1j}(r), \abs{\bfs}^2 = z^2 \right]
\end{split}
\label{eq:Delta_1}
\ee
where $\mathcal{F}_{ij}(r)$ denotes the event that $\bfsh(i),\bfsh(j)$ share $r$ common terms. The second equality above holds because of the symmetry of the code construction. Using \eqref{eq:Delta_1}, we have
\be
\begin{split}
& \frac{\lambda^2}{\Delta} = \frac{\left({M^L}  \expec(U_1(\bfs) \mid \abs{\bfs}^2 = z^2)\right)^2}
{\frac{M^L}{2} \sum_{r=1}^{L-1} {L \choose r} (M-1)^{L-r} \expec\left[U_1 U_j \mid \mathcal{F}_{1j}(r), \abs{\bfs}^2 = z^2 \right]}\\
&=\frac{2 M^L}{(M-1)^L}\\
& \quad  \cdot \frac{(M-1)^L  \left(\expec(U_1(\bfs) \mid \abs{\bfs}^2 = z^2)\right)^2}
{\sum_{r=1}^{L-1} {L \choose r} (M-1)^{L-r} \expec\left[U_1 U_j \mid \mathcal{F}_{1j}(r), \abs{\bfs}^2 = z^2 \right]}\\
& = \frac{2 M^L}{(M-1)^L} \cdot \frac{1}{T(z^2)}
\end{split}
\ee
where $T(z^2)$ was defined in \eqref{eq:tz2_def}. An upper bound for $T(z^2)$ was derived in the proof of Theorem \ref{thm:rd_sparc}. Using the bound given by \eqref{eq:tz2_ub} in Appendix \ref{app:lemma}, we obtain
\be
\frac{\lambda^2}{\Delta} \geq \kappa L^{(b-b_{min})( R - (1- D/\rho^2))/R}
\label{eq:term3_bound}
\ee
where $\kappa$ is a generic positive constant  and as specified in \eqref{eq:bmin},
\be b_{min} =  \frac{2.5R}{R - (1- D/\rho^2)}.  \label{eq:bmin_def}\ee

Using the bounds obtained in \eqref{eq:term1_bound}, \eqref{eq:term2_bound} and \eqref{eq:term3_bound} in Suen's inequality, we have for sufficiently large $n$
\be P\left(\sum_{i=1}^{e^{nR}} U_i(\bfs)=0 \mid \abs{\bfs}^2 = z^2 \right) \leq e^{-\min \{T_1,T_2,T_3 \}} \ee %
where
\be
\begin{split}
&T_1 >  e^{n\left( R-f(z^2)  - \frac{\log n}{2n} - \frac{\kappa}{n} \right)} \\
&T_2 > \kappa L^{b-1}, \\
&T_3> \kappa L^{ (b-b_{min}) (1- \frac{(1 - {D}/{\rho^2})}{R}) }.
\end{split}
\label{eq:T1T2T3_def}
\ee
Combining this with \eqref{eq:sum_ui_new} and \eqref{eq:err_bound}, we obtain
\be
\begin{split}
P_{e,n} & \leq  P(\abs{\bfs}^2 \geq \rho^2) \ +  \ \max_{z^2 \in (D, \rho^2)} P(\mc{E}(\bfs) \mid \abs{\bfs}^2 = z^2 ) \\
&< P(\abs{\bfs}^2 \geq \rho^2) \ + \ \max_{z^2 \in (D, \rho^2)} \exp(-\min \{T_1,T_2,T_3 \}) \\
& = P(\abs{\bfs}^2 \geq \rho^2) \ + \  \exp\Big(- \hspace{-3pt}\min_{z^2 \in (D, \rho^2)} \min \{T_1,T_2,T_3 \}\Big).
\label{eq:at_last}
\end{split}
\ee

We now show that if $b$ is chosen according to the statement of Theorem \ref{thm:err_exp}, the second  term in \eqref{eq:at_last}
decays exponentially in $n^t$, for some $t >1$.

First consider $T_1$. From \eqref{eq:T1T2T3_def}, we have
\be
\min_{z^2 \in (D, \rho^2)} T_1 >  \exp\left[n\left( R - \frac{\log n}{2n} - \frac{\kappa}{n} - \max_{z^2 \in (D, \rho^2)} f(z^2)  \right)\right]
\label{eq:min_T1}
\ee
In the proof of Lemma \ref{lem:EU1_asymp}, we showed that $f(z^2)$ is increasing in $(D, \rho^2)$ and so the maximum in this interval is
\[
f(\rho^2) = \frac{1}{2} \log \frac{\rho^2}{D}.
\]
Recall from \eqref{eq:R_a2} that $R = \max\{ \frac{1}{2} \log \tfrac{a^2}{D} ,  (1 - \tfrac{D}{a^2})\}$, and $a^2 > \rho^2$.
Therefore
\[
R > \frac{1}{2} \log\frac{\rho^2}{D} +  \frac{\log n}{2n} + \frac{\kappa}{n}
\]
for $n$ sufficiently large. Hence the right side of \eqref{eq:min_T1} grows exponentially  with $n$.

Next consider $T_2$. Since $b>2$, the lower bound $\kappa L^{b-1}$   given in \eqref{eq:T1T2T3_def} grows faster than $n$ as $n=\tfrac{b}{R} L\log L$.
Finally, the lower bound on $T_3$ given in in \eqref{eq:T1T2T3_def} will also grow faster than $n$ if
\[(b-b_{min}) \left(1- \frac{(1 - {D}/{\rho^2})}{R}\right) >1. \]
Substituting for $b_{min}$ from $\eqref{eq:bmin_def}$, we see that this is equivalent to
\be  b >   \frac{3.5R}{R - (1- D/\rho^2)}.
\label{eq:brho_cond} \ee
Therefore the lower bound on $P_{e,n}$ in \eqref{eq:at_last} can be written as
\[
P_{e,n} \leq P(\abs{\bfs}^2 \geq \rho^2) \ + \ \exp(-\Theta(n^t))
\]
where $t>1$ provided $b$ satisfies \eqref{eq:brho_cond}.  As $\rho^2 < a^2$, any $b$ chosen according to \eqref{eq:bmin_exp} in the statement of the theorem will satisfy \eqref{eq:brho_cond}. Since $\rho^2$ is an arbitrary number in the interval $(\sigma^2, a^2)$, setting $\rho^2=a^2 - \e$ completes the proof.

%%%%%%%%%%%%%%%%%%%%%%%%%%%%%%%%%%
\section{Discussion} \label{sec:conc}
We have studied a new ensemble of codes for lossy compression %constructed using the framework of  sparse linear regression.
where the codewords are structured linear combinations of elements of a dictionary. The size of the dictionary is a low-order polynomial in the block length. We showed that with minimum-distance encoding, this ensemble  achieves the optimal rate-distortion function of an i.i.d Gaussian source with the optimal error exponent for all distortions below $\frac{\sigma^2}{4.91}$, or equivalently for rates higher than $1.15$ bits per source sample. It was also shown that sparse regression codes are robust in the following sense: a SPARC designed to compress an i.i.d Gaussian source of variance $\sigma^2$  with distortion $D$ can compress any ergodic source of  variance less than $\sigma^2$ to within distortion $D$. Thus the sparse regression ensemble retains many of the good covering properties of the i.i.d random Gaussian ensemble.

An immediate goal is to prove that the optimal Gaussian rate-distortion function can be achieved for all values of target distortion with minimum-distance encoding. The  main challenge lies in controlling the asymptotic behavior of the function $T(.)$, defined in \eqref{eq:tz2_def}. If $X$ is the random variable denoting the number of codewords that represent the sequence $\bfs$ within distortion
$D$, it  can be verified that the ratio of the standard deviation and the mean of $X$ is governed by $\sqrt{T(z^2)}$ when $\abs{\bfs}^2=z^2$. The  proofs of both Theorems \ref{thm:rd_sparc} and \ref{thm:err_exp} require  $T(z^2)$ to go to zero with growing $n$, i.e.,  the standard deviation of $X$ must be  small compared to its mean. The condition under which  this happens is given by Lemma \ref{lem:Tz2}. When $D > \tfrac{\sigma^2}{4.91}$, the second moment method and Suen's inequality both fail for $\tfrac{1}{2} \log \frac{\sigma^2}{D} < R  < (1 - \tfrac{D}{\sigma^2})$ because $T(z^2) \to \infty$, i.e., the standard deviation of the number of solutions $X$ is much larger than its mean. One approach  to  improve the rate-distortion result for $D > \frac{\sigma^2}{4.91}$ is to expurgate atypical realizations of the design matrices from the SPARC ensemble in order to reduce the standard deviation of $X$; this would weaken the condition for $T(z^2)$ to go to $0$.

In this paper, the non-zero coefficients in each section of the codeword $\beta$ were all chosen to be equal. We can also choose the non-zero coefficients  to have different values in each section.\footnote{Recall that coefficient values are fixed a priori and revealed to both encoder and decoder; the codewords are determined only by the positions of the non-zero coefficient in each section.} This can help in designing fast encoding algorithms.  One such choice of varying section coefficients is used in \cite{RVGaussianFeasible} to derive a computationally efficient encoder based on successive approximation. This encoder chooses the codeword $\beta$ section by section, creating a residue after each step that is to be approximated by the subsequent sections.  This algorithm is shown to asymptotically  attain the optimal Gaussian distortion-rate function for \emph{all} rates.

The problem of compression with SPARCs is also  related to sparsity recovery in high-dimensional linear regression
\cite{WainSparse, AkTarokh, FletchRG}. While both problems aim to recover the positions of the non-zero coefficients of a sparse vector, the main difference is that both the positions and the values of the non-zero coefficients are unknown in the sparsity recovery problem, while the values are fixed a priori in the SPARC setting.  The connections between these two problems is an interesting topic for further investigation.
%%%%%%%%%%%%%%%%%%%%%%%%%%%%%%%%%%%
\appendices
\section{Proof of Lemma \ref{lem:Tz2}} \label{app:lemma}
%\begin{proof}
Substituting $\alpha = r/L$ in \eqref{eq:tz2_def}, we can express $T(z^2)$  as follows.
\be
\begin{split}
& T(z^2) \\
 &=  \sum_{\alpha = 1/L}^{1-1/L} {L \choose L\alpha} (M-1)^{-L\alpha} \frac{\expec[U_1 U_j \mid \mathcal{F}_{1j}(\alpha), \abs{\bfs}^2 = z^2]}
{(\expec [ U_1 \mid \abs{\bfs}^2 = z^2])^2 } \\
& \leq (L-1) \max_{\alpha \in \{1/L, \ldots, (L-1)/L\}} \Bigg\{ {L \choose L\alpha} (M-1)^{-L\alpha}\\
&\hspace{1.3in} \cdot   \frac{\expec[U_1 U_j \mid \mathcal{F}_{1j}(\alpha), \abs{\bfs}^2 = z^2]}
{(\expec [ U_1 \mid \abs{\bfs}^2 = z^2])^2 } \Bigg\}
\end{split}
\ee
where $\mathcal{F}_{1j}(\alpha)$  denotes the event that $\bfsh(i),\bfsh(j)$ share $\alpha L$ common terms.
Taking logarithms, we obtain
\be
\begin{split}
 \log T(z^2) &  \leq  \max_{\alpha \in \{1/L, \ldots, (L-1)/L\}}  \Bigg\{\log (L-1) + \log {L \choose L\alpha}  \\
& - b \alpha L \log L +  \log {\expec[U_1 U_j \mid \mathcal{F}_{1j}(\alpha), \abs{\bfs}^2 = z^2]} \\
&   - 2 \log \expec [ U_1 \mid \abs{\bfs}^2 = z^2]\Bigg\}.
\end{split}
\label{eq:logTz2}
\ee
The asymptotic behavior of the last term above was established in the proof of Lemma \ref{lem:EU1_asymp}, and is given by \eqref{eq:EU1_ld}. The result in \eqref{eq:EU1_ld} can be sharpened using the strong version of Cram{\'e}r's large-deviation theorem by Bahadur and Rao \cite{BahadurR60}. Indeed, for all sufficiently large $n$
\be
\expec [ U_1 \mid \abs{\bfs}^2 = z^2] \geq \frac{\kappa}{\sqrt{n}}\exp(-nf(z^2))
\label{eq:EU1_bd}
\ee
where $f(z^2)$ is given by \eqref{eq:fz2_cond} and $\kappa>0$ is a constant. Here and in the sequel, $\kappa$ denotes a generic positive constant whose exact value is not needed.

Next we bound the term
$\expec[U_1 U_j \mid \mathcal{F}_{1j}(\alpha), \abs{\bfs}^2 = z^2]$.
We have
\be
\begin{split}
& \expec[  U_1 U_j \mid \mathcal{F}_{1j}(\alpha), \abs{\bfs}^2 = z^2] \\
& =P\left(U_1(\bfs)=1, U_j(\bfs)=1 \mid \mathcal{F}_{1j}(\alpha), \abs{\bfs}^2 = z^2 \right) \\
&= P\left(  \frac{1}{n} \sum_{k=1}^n (\hat{S}_k(1) - S_k)^2 \leq D, \right. \\
& \hspace{0.4in} \left. \frac{1}{n}\sum_{k=1}^n (\hat{S}_k(j) - S_k)^2 \leq D \mid \mathcal{F}_{1j}(\alpha), \abs{\bfs}^2 = z^2 \right) \\
&= P\left( \frac{1}{n} \sum_{k=1}^n (\hat{S}_k(1) - z)^2 \leq D, \right. \\
& \hspace{0.4in} \left. \frac{1}{n}\sum_{k=1}^n (\hat{S}_k(j) - z)^2 \leq D \mid \mathcal{F}_{1j}(\alpha) \right)
\end{split}
\ee
where the last equality is due to the fact that $(\bfsh(1),\bfsh(j))$ has the same joint distribution as $(\mathbf{O}\bfsh(i),\mathbf{O}\bfsh(j))$ for any orthogonal (rotation) matrix $\mathbf{O}$. The $(\hat{S}_k(1), \hat{S}_k(j))$ pairs are i.i.d across $k$, and each is  jointly Gaussian with zero-mean vector and covariance matrix
\be
K_{\alpha} = \gamma^2 \begin{bmatrix} 1 & \alpha \\ \alpha & 1 \end{bmatrix},
\label{eq:Kr} \ee
when $\bfsh(1), \bfsh(j)$ share $r= \alpha L$ common terms. Using a two-dimensional Chernoff bound, we have $\forall t <0$ and sufficiently large $n$
\be
\begin{split}
& \frac{1}{n} \log P  \left( \sum_{k=1}^n \frac{(\hat{S}_k(1) - z)^2}{n} \leq D,  \right. \\
& \quad \,  \left. \sum_{k=1}^n \frac{(\hat{S}_k(j) - z)^2}{n} \leq D \mid \mathcal{F}_{1j}(\alpha)\right) \leq \frac{\kappa}{ \sqrt{n}} \exp(-n C_\alpha(t))
\end{split}
\label{eq:C_ut}
\ee
where $\kappa >0$ is a constant and for $t<0$
\be
\begin{split}
C_\alpha(t)  &=  2tD - \log \expec \left( e^{t(\hat{S}(1) -z)^2 + t(\hat{S}(j) -z)^2} \right) \\
&=2tD  - \frac{ 2tz^2}{1-2\gamma^2 t (1+\alpha)} \\
& \ + \frac{1}{2} \log (1-4\gamma^2 t + 4 \gamma^4 t^2 (1-\alpha^2)).
\end{split}
\label{eq:CDef}
\ee
The optimal value of $t$ is the one that maximizes the right side of \eqref{eq:CDef}. Since  this optimal value cannot be expressed analytically, we choose $t=\frac{t_0}{1+\alpha}$ where $t_0 =\frac{1}{2\gamma^2}\left( 1 - \frac{2z^2}{\sqrt{(\gamma^4 + 4z^2D)} - \gamma^2} \right)$ is  optimal when $\alpha=0$.\footnote{Note that $t_0$ needs to be negative. This holds when $z^2+ \gamma^2 >D$, which is satisfied for all $\abs{\bfs}^2 =z^2 \geq D$.  This is true for all source sequences considered in the error analysis, as explained in the discussion following \eqref{eq:err_bound}.} With this choice, for all sufficiently large $n$  \eqref{eq:C_ut} yields
\be
\expec[U_1 U_j \mid \mathcal{F}_{1j}(\alpha), \abs{\bfs}^2 = z^2] \leq \frac{\kappa}{ \sqrt{n}} \exp\left(-n C_\alpha\left(\frac{t_0}{1+\alpha}\right)\right)
\label{eq:EU1Uj}
\ee
where
\be
\begin{split}
C_\alpha\left(\frac{t_0}{1+\alpha}\right) & =  \frac{1}{\gamma^2(1+\alpha)}\left( D + z^2 - \frac{2z^2D}{A(z^2)} - \frac{A(z^2)}{2} \right) \\
&\quad  + \frac{1}{2} \log \left( \frac{4z^2} {A(z^2)(1+\alpha)} \left(\alpha +  \frac{z^2(1-\alpha)}{A(z^2)}\right)  \right)
\end{split}
\label{eq:Calph_t}
\ee
where $A(z^2) = \sqrt{(\gamma^4 + 4z^2D)} - \gamma^2$.

Using \eqref{eq:EU1Uj} and \eqref{eq:EU1_bd} in \eqref{eq:logTz2}, for all sufficiently large $L$ we have
\be
\begin{split}
\log T(z^2) & \leq  \max_{\alpha \in \{1/L, \ldots, (L-1)/L\}} \left\{\log (L-1) + \log {L \choose L\alpha} \right. \\
 & \qquad  - b \alpha L \log L  + n(g(z^2) + \gamma_n) \Big\}
\end{split}
\label{eq:logTz2_a}
\ee
where
\be
\begin{split}
g(z^2) & = \frac{\alpha}{\gamma^2(1+\alpha)} \left( D+ z^2 - \frac{2z^2D}{A(z^2)} - \frac{A(z^2)}{2} \right) \\
& \quad -\frac{1}{2} \log \left(\frac{1-\alpha}{1+\alpha} + \frac{\alpha A(z^2)}{(1+\alpha)z^2} \right)
\end{split}
\label{eq:gz2}
\ee
and  $\gamma_n = \frac{\log (\kappa n)}{2n}$.

The function $g(z^2)$ is strictly increasing in the interval $(D,\rho^2)$. This is seen by noting that the derivative
\be
\begin{split}
 \frac{dg}{dz^2} & = \frac{\alpha}{\gamma^2 (1+\alpha)}\left( 1- \frac{2D}{\sqrt{\gamma^4 + 4 z^2 D}} \right) \\
 & \quad + \frac{\alpha (A(z^2))^2}{4z^2\sqrt{\gamma^4 + 4 z^2 D} ((1-\alpha)z^2 + \alpha A(z^2))}
\end{split}
\ee
is strictly positive for $z^2 \geq D$.  Therefore $g(z^2)$ can be upper bounded in \eqref{eq:logTz2_a} by its maximum value
\be g(\rho^2) = \frac{1}{2} \log \left( \frac{1+ \alpha}{1 - \alpha(1-\frac{2D}{\rho^2})} \right).  \label{eq:g_rho2}\ee

Substituting $n=\frac{b}{R} L \log L$ in \eqref{eq:logTz2_a} and dividing both sides  by $L \log L$, we obtain that for all $z^2 \in (D, \rho^2)$:
\begin{align}
& \frac{\log T(z^2)}{L \log L}  \leq \max_{\alpha \in \{1/L, \ldots, (L-1)/L\}}  \left\{\frac{1}{L} + \frac{\log {L \choose L\alpha}}{L \log L} - b \alpha \right. \nonumber \\
& \hspace{1.41in} \left.+ \, \frac{b g(\rho^2)}{R} + \frac{1}{2L} + \frac{\kappa}{L \log L} \right\} \\
&= \max_{\alpha \in \{1/L, \ldots, (L-1)/L\}}  \left\{\frac{3}{2L} + \frac{\log {L \choose L\alpha} + \kappa}{L \log L} \right.  \nonumber \\
& \hspace{1.5in} \left. - \frac{b}{R}(\alpha R - g(\rho^2)) \right\} \\
& \leq \max_{\alpha \in \{1/L, \ldots, (L-1)/L\}}  \left\{\frac{3}{2L}   + \min \left( \alpha, 1-\alpha, \frac{\log 2}{\log L} \right) \right. \nonumber \\
& \hspace{1.3in}  \left. + \frac{\kappa}{L \log L} - \frac{b}{R}(\alpha R - g(\rho^2)) \right\}
\label{eq:logTz2_b}
\end{align}
where \eqref{eq:logTz2_b} is obtained using the bound
\[ \log {L \choose L\alpha} <   \min \ \{ \alpha L \log L, \  (1-\alpha) L \log L, \ L \log 2 \}. \]
$T(z^2)$ will go to zero with growing $L$ if the right side of \eqref{eq:logTz2_b} is strictly negative as $L$ grows large. This will be true if for
sufficiently large $L$, the following two conditions hold \emph{for all} $\alpha \in \{1/L, \ldots, (L-1)/L \}$:
\begin{enumerate}
\item $\alpha R  - g(\rho^2) > 0$, and

\item $b > \frac{R}{\alpha R - g(\rho^2)} \left( \frac{3}{2L} + \min \left( \alpha, 1-\alpha, \frac{\log 2}{\log L} \right) + \frac{\kappa}{L \log L} \right)$.
\end{enumerate}

The first condition holds due to the following claim.

\vspace{10pt}
\emph{\textbf{Claim}}: For $R> R_{sp}(D)$, the function $h(\alpha) = \alpha R  - g(\rho^2)$ is strictly positive in the interval $[\tfrac{1}{L}, \tfrac{L-1}{L}]$. Further, for all sufficiently large $L$  its minimum in the interval is attained at $\alpha=1/L$ where it equals
\[ h( {1}/{L})  = \frac{1}{L} \left( R - (1- D/\rho^2)\right) + \frac{\kappa}{L^2}, \quad \kappa >0.\]
\emph{\textbf{Proof of Claim}}:
 We first note that $h(0)=0$ and $h(1) =R - \frac{1}{2} \log \frac{\rho^2}{D} > 0$. That $h$ is positive in $[\tfrac{1}{L},  \tfrac{L-1}{L}]$ is seen
 by combining two observations:
 \begin{enumerate}
 \item[(a)] $h'(0) = R - (1 - \frac{D}{\rho^2})$,  which is positive and hence $h$ is increasing at $\alpha=0$.
 \item[(b)] In the interval $(0,1)$,  $h$ has at most one local maximum  and no minima. Indeed, by solving $h'(\alpha)=0$  it can be verified that $h$ is  increasing in $(0,1)$ when $\rho^2/D \leq 4$; when $\rho^2/D > 4$, $h$ has  one maximum (and no minima) in $(0,1)$ and the maximum occurs at
 \ben
 \begin{split}
 \alpha^* & = \frac{D/\rho^2}{(1- {2D}/{\rho^2})} \Bigg(1  +  \\
   & \quad \left.  \left[1  +  \frac{\rho^4}{D^2} \left( 1- \frac{2D}{\rho^2}\right)\left(1- \frac{(1-D/\rho^2)}{R}\right)\right]^{\frac{1}{2}} \right).
    \end{split}
 \een
 \end{enumerate}
 For $\rho^2/D \leq 4$, the second part of the claim follows from (b). For $\rho^2/D > 4$, (a) and (b) imply that the minimum value  of $h$ in $[\tfrac{1}{L}, \tfrac{L-1}{L}]$ is attained either at either of the end points. Using a Taylor expansion for the function $h(\alpha)$ around the point $\alpha=0$, we can write
\be
\begin{split}
& h(\alpha) = \alpha R  - g(\rho^2) = \alpha(R- (1 - D/\rho^2)) \\
& \qquad \qquad  + \,  \alpha^2 \frac{(1 - D/\rho^2)(D/\rho^2 - \zeta (1 - 2D/\rho^2))}{(1+ \zeta)^2 \ (1- \zeta(1-2D/\rho^2))^2}.
\end{split}
\label{eq:tay1}
\ee
for some $\zeta \in (0, \alpha)$. Therefore for large enough $L$
 \be
 h( {1}/{L})  = \frac{1}{L} \left( R - (1- D/\rho^2)\right) + \frac{\kappa}{L^2}, \quad \kappa >0.
 \label{eq:tay2}
 \ee
Similarly, using a Taylor expansion for $h(\alpha)$ around $\alpha=1$, we get
\be
 h( 1 -\tfrac{1}{L} ) =  \left(R - \frac{1}{2} \log \frac{\rho^2}{D}\right)%\left(1 - \frac{1}{L}\right)
 + \frac{1}{L}\left( R - \frac{1}{4}(\rho^2/{D} -1)  \right)  - \frac{\kappa'}{L^2}
 \label{eq:tay3}
\ee
for some $\kappa' >0$. Since  $R > \frac{1}{2} \log \frac{\sigma^2}{D}$, the minimum of $h$ is attained at $1/L$ for sufficiently large $L$.
\qedfilled
\vspace{20pt}

Recall that the condition $b$ needs to satisfy is
\be
b  > \hspace{-3pt} \max_{\alpha \in \{\frac{1}{L}, \ldots, \frac{L-1}{L} \}}
\frac{ R \left(\frac{3}{2L} + \min \left( \alpha, 1-\alpha, \frac{\log 2}{\log L} \right) + \frac{\kappa}{L \log L}\right)} {\alpha R  - g(\rho^2)}.
\label{eq:bcond_L1}
\ee
 From the claim and its proof (the Taylor expansions in \eqref{eq:tay1}-\eqref{eq:tay3}),  it is seen that the maximum in \eqref{eq:bcond_L1} is attained at $\alpha=1/L$ for $L$ sufficiently large. Substituting this value, we get
\be
b > \frac{2.5 R + \frac{\kappa}{\log L}}{R - (1- D/\rho^2) + \frac{\kappa}{L}}.
\ee
The minimum value of $b$, denoted $b_{min}$ is obtained by letting $L \to \infty$:
\be
b_{min} = \frac{2.5 R}{R - \left(1 - D/\rho^2 \right)}.
\label{eq:bmin}
\ee
For $b > b_{min}$, \eqref{eq:logTz2_b} implies that for $L$ sufficiently large
\be
\begin{split}
& \frac{\log T(z^2)}{L \log L}  \\
& \leq \max_{\alpha \in \{1/L, \ldots, (L-1)/L\}}  \frac{-(b-b_{min})(\alpha R - g(\rho^2))}{R} + \frac{\kappa}{L \log L} \\
& =  \frac{-(b-b_{min})\left( R - (1- D/\rho^2)\right)}{R L}  + \frac{\kappa}{L \log L}
\label{eq:logTz2_c}
\end{split}
\ee
where the equality is obtained using the claim above.
Therefore for $L$ sufficiently large,
\be
T(z^2) \leq \kappa L^{-{(b-b_{min})( R - (1- D/\rho^2))}/{R}}
\label{eq:tz2_ub}
\ee
which goes to zero as $L$ (or $n$) goes to $\infty$. This completes the proof of the lemma.

\section*{Acknowledgement}
The authors would like to thank A. Barron for many useful discussions regarding sparse regression codes. They would also like to thank the anonymous reviewers for their comments.

\IEEEtriggeratref{13}


\begin{thebibliography}{34}

\bibitem{AntonyML}
A.~Barron and A.~Joseph, ``Least squares superposition codes of moderate
  dictionary size are reliable at rates up to capacity,'' {\em IEEE Trans. Inf.
  Theory}, vol.~58, pp.~2541--2557, May 2012.

\bibitem{AntonyFast}
A.~Joseph and A.~R. Barron, ``Fast sparse superposition codes have near
  exponential error probability for ${R} < \mathcal{C}$,'' {\em IEEE Trans.
  Inf. Theory}, vol.~60, pp.~919--942, Feb. 2014.

\bibitem{MP93}
S.~Mallat and Z.~Zhang, ``Matching pursuits with time-frequency dictionaries,''
  {\em IEEE Trans. Signal Processing}, vol.~41, pp.~3397 --3415, Dec. 1993.

\bibitem{BarronCDD08}
A.~R. Barron, A.~Cohen, W.~Dahmen, and R.~A. DeVore, ``{Approximation and
  learning by greedy algorithms},'' {\em Annals of Statistics}, vol.~36,
  pp.~64--94, 2008.

\bibitem{TroppOMP}
J.~Tropp and A.~Gilbert, ``Signal recovery from random measurements via
  orthogonal matching pursuit,'' {\em IEEE Trans. Inf. Theory}, vol.~53,
  pp.~4655 --4666, Dec. 2007.

\bibitem{CandesTaoLP}
E.~Candes and T.~Tao, ``Decoding by linear programming,'' {\em IEEE Trans. Inf.
  Theory}, vol.~51, pp.~4203 -- 4215, Dec. 2005.

\bibitem{DonohoCS}
D.~Donoho, ``Compressed sensing,'' {\em IEEE Trans. Inf. Theory}, vol.~52,
  pp.~1289 --1306, April 2006.

\bibitem{RVGaussianFeasible}
R.~Venkataramanan, T.~Sarkar, and S.~Tatikonda, ``Lossy compression via sparse
  linear regression: Computationally efficient encoding and decoding,'' in {\em
  Proc. IEEE Int. Symp. Inf. Theory}, 2013.
\newblock To appear in \emph{IEEE Trans. Inf. Theory.}

\bibitem{CoverThomas}
T.~M. Cover and J.~A. Thomas, {\em Elements of Information Theory}.
\newblock John Wiley and Sons, Inc., 2006.

\bibitem{EyForney93}
M.~Eyuboglu and J.~Forney, G.D., ``Lattice and trellis quantization with
  lattice- and trellis-bounded codebooks-high-rate theory for memoryless
  sources,'' {\em IEEE Trans. Inf. Theory}, vol.~39, pp.~46 --59, Jan 1993.

\bibitem{Zamir02}
R.~Zamir, S.~Shamai, and U.~Erez, ``Nested linear/lattice codes for structured
  multiterminal binning,'' {\em IEEE Trans. Inf. Theory}, vol.~48, pp.~1250
  --1276, June 2002.

\bibitem{GuptaVerduWeiss}
A.~Gupta, S.~Verd{\` u}, and T.~Weissman, ``Rate-distortion in near-linear
  time,'' in {\em Proc. IEEE Int. Symp. on Inf. Theory}, 2008.

\bibitem{KontGioran}
I.~Kontoyiannis and C.~Gioran, ``Efficient random codebooks and databases for
  lossy compression in near-linear time,'' in {\em IEEE Inf. Theory Workshop on
  Networking and Inf. Theory}, pp.~236 --240, June 2009.

\bibitem{JalaliWeiss}
S.~Jalali and T.~Weissman, ``Rate-distortion via {M}arkov {C}hain {M}onte
  {C}arlo,'' in {\em Proc. IEEE Int. Symp. on Inf. Theory}, 2010.

\bibitem{GuptaVerdu09}
A.~Gupta and S.~Verd{\`u}, ``Nonlinear sparse-graph codes for lossy
  compression,'' {\em IEEE Trans. Inf. Theory}, vol.~55, pp.~1961 --1975, May
  2009.

\bibitem{WainManeva10}
M.~Wainwright, E.~Maneva, and E.~Martinian, ``Lossy source compression using
  low-density generator matrix codes: Analysis and algorithms,'' {\em IEEE
  Trans. Inf. Theory}, vol.~56, no.~3, pp.~1351 --1368, 2010.

\bibitem{polarrd}
S.~Korada and R.~Urbanke, ``Polar codes are optimal for lossy source coding,''
  {\em IEEE Trans. Inf. Theory}, vol.~56, pp.~1751 --1768, April 2010.

\bibitem{KontSPARC}
I.~Kontoyiannis, K.~Rad, and S.~Gitzenis, ``Sparse superposition codes for
  {G}aussian vector quantization,'' in {\em 2010 IEEE Inf. Theory Workshop},
  p.~1, Jan. 2010.

\bibitem{Lapidoth97}
A.~Lapidoth, ``On the role of mismatch in rate distortion theory,'' {\em IEEE
  Trans. Inf. Theory}, vol.~43, pp.~38 --47, Jan 1997.

\bibitem{SakMismatch1}
D.~Sakrison, ``The rate distortion function for a class of sources,'' {\em
  Information and Control}, vol.~15, no.~2, pp.~165 -- 195, 1969.

\bibitem{SakMismatch2}
D.~Sakrison, ``The rate of a class of random processes,'' {\em IEEE Trans. Inf.
  Theory}, vol.~16, pp.~10 -- 16, Jan 1970.

\bibitem{JansonBook}
S.~Janson, {\em Random Graphs}.
\newblock Wiley, 2000.

\bibitem{janson98}
S.~Janson, ``New versions of Suen's Correlation Inequality,'' {\em Random
  Structures and Algorithms}, vol.~13, no.~3-4, pp.~467--483, 1998.

\bibitem{SakrisonFin}
D.~Sakrison, ``A geometric treatment of the source encoding of a {G}aussian
  random variable,'' {\em IEEE Trans. Inf. Theory}, vol.~14, pp.~481 -- 486,
  May 1968.

\bibitem{IngberKochman}
A.~Ingber and Y.~Kochman, ``The dispersion of lossy source coding,'' in {\em
  Data Compression Conference (DCC)}, pp.~53 --62, March 2011.

\bibitem{KostinaV12}
V.~Kostina and S.~Verd{\'u}, ``Fixed-length lossy compression in the finite
  blocklength regime,'' {\em IEEE Trans. on Inf. Theory}, vol.~58, no.~6,
  pp.~3309--3338, 2012.

\bibitem{MartonRD74}
K.~Marton, ``Error exponent for source coding with a fidelity criterion,'' {\em
  IEEE Trans. Inf. Theory}, vol.~20, pp.~197 -- 199, Mar 1974.

\bibitem{IharaKubo00}
S.~Ihara and M.~Kubo, ``Error exponent for coding of memoryless {G}aussian
  sources with a fidelity criterion,'' {\em IEICE Trans. Fundamentals},
  vol.~E83-A, p.~1891–1897, Oct. 2000.

\bibitem{Den2008LD}
F.~Den~Hollander, {\em Large deviations}, vol.~14.
\newblock American Mathematical Society, 2008.

\bibitem{DemboZbook}
A.~Dembo and O.~Zeitouni, {\em Large Deviations Techniques and Applications}.
\newblock Springer, 1998.

\bibitem{BahadurR60}
R.~R. Bahadur and R.~R. Rao, ``On deviations of the sample mean,'' {\em The
  Annals of Mathematical Statistics}, vol.~31, no.~4, 1960.

\bibitem{WainSparse}
M.~J. Wainwright, ``Information-theoretic limits on sparsity recovery in the
  high-dimensional and noisy setting,'' {\em IEEE Trans. Inf. Theory}, vol.~55,
  no.~12, pp.~5728--5741, 2009.

\bibitem{AkTarokh}
M.~Akcakaya and V.~Tarokh, ``Shannon-theoretic limits on noisy compressive
  sampling,'' {\em IEEE Trans. Inf. Theory}, vol.~56, pp.~492 --504, Jan. 2010.

\bibitem{FletchRG}
A.~K. Fletcher, S.~Rangan, and V.~K. Goyal, ``Necessary and sufficient
  conditions for sparsity pattern recovery,'' {\em IEEE Trans. Inf. Theory},
  vol.~55, pp.~5758--5772, Dec. 2009.

\end{thebibliography}
\end{document}